\documentclass[11pt]{article}
\usepackage{fullpage}

\usepackage{graphics}
\usepackage[dvips]{epsfig}

\usepackage{amsmath}
\usepackage{amssymb}
\usepackage{amsfonts}
\usepackage{graphicx}

\usepackage[small,compact]{titlesec}
\usepackage{times}

\newtheorem{theorem}{Theorem}[section]
\newtheorem{lemma}{Lemma}[section]
\newtheorem{corollary}{Corollary}[section]

\newtheorem{proposition}{Proposition}[section]

\newcommand{\finclaim}{\hfill $\bullet$}
\newcommand{\qed}{\hfill $\Box$ \bigbreak}
\newenvironment{proof}{\noindent {\bf Proof.}}{\qed}

\newcommand{\cC}{{\cal C}}

\newcommand{\cH}{{\cal H}}
\newcommand{\cS}{{\cal S}}

\newcommand{\cA}{{\cal A}}

\newcommand{\cV}{{\cal V}}
\newcommand{\cM}{{\cal M}}

\newcommand{\E}{{\cal{E}}}

\newcommand{\remove}[1]{}




\begin{document}

\baselineskip  0.18in 
\parskip     0.0in 
\parindent   0.3in 

\title{{\bf Anonymous Meeting in Networks\thanks{
A preliminary version of this paper appeared in the Proc. 24th Annual ACM-SIAM Symposium on Discrete Algorithms (SODA 2013). 
A part of this research was done during the Dieudonn\'e's stay at the Research Chair in Distributed Computing of
the Universit\'{e} du Qu\'{e}bec en Outaouais as a postdoctoral fellow. Supported in part by NSERC discovery grant 
and by the Research Chair in Distributed Computing of
the Universit\'{e} du Qu\'{e}bec en Outaouais.}}}

\author{
Yoann Dieudonn\'{e}\thanks{MIS, Universit\'{e} de Picardie  Jules  Verne Amiens,  France and D\'{e}partement d'informatique, Universit\'{e} du Qu\'{e}bec en Outaouais,
Gatineau, Qu\'{e}bec J8X 3X7,
Canada. E-mail:  yoann.dieudonne@u-picardie.fr. 
}
\and
Andrzej Pelc\thanks{D\'{e}partement d'informatique, Universit\'{e} du Qu\'{e}bec en Outaouais,
Gatineau, Qu\'{e}bec J8X 3X7,
Canada. E-mail: pelc@uqo.ca.}
}

\date{ }
\maketitle

\begin{abstract}
A team consisting of an unknown number of mobile agents, starting from different nodes of an unknown network,
possibly at different times, have to meet at the same node.
Agents are anonymous (identical), execute the same deterministic algorithm and move in synchronous rounds along links of the network. An initial
configuration of agents is called {\em gatherable} if there exists a deterministic algorithm (even dedicated to this particular configuration) that achieves meeting of all
agents in one node. Which configurations are gatherable and how to gather all of them deterministically by the same algorithm?

We give a complete solution of this gathering problem in arbitrary networks. We characterize all gatherable configurations and give two {\em universal} deterministic 
gathering algorithms, i.e., algorithms that gather all gatherable configurations. The first algorithm works under the assumption that a common upper bound $N$
on the size of the network is known to all agents. In this case our algorithm guarantees {\em gathering with detection}, i.e., the existence of a round 
for any gatherable configuration, such that  all agents are at the
same node and all declare that gathering is accomplished. If no upper bound on the size of the network is known, we show that a universal algorithm for gathering
with detection does not exist. Hence, for this harder scenario, we construct a second universal gathering algorithm, which guarantees that, for any gatherable
configuration, all agents eventually get to one node and stop, although they cannot tell if gathering is over. The time of the first algorithm is polynomial in the
upper bound $N$ on the size of the network, and the time of the second algorithm is polynomial in the (unknown) size itself.

Our results have an important consequence for the leader election
problem for anonymous agents in arbitrary graphs.
Leader election is a fundamental symmetry breaking problem in distributed
computing. Its goal is to assign, in some common round, value 1 (leader) to one of the entities and value 0 (non-leader)
to all others.  For anonymous agents in graphs, leader election turns out to be equivalent to
gathering with detection. Hence, as a by-product, we obtain a complete solution of 
the leader election
problem for anonymous agents in arbitrary graphs.

\vspace{2ex}

\noindent {\bf Keywords:} gathering, deterministic algorithm, anonymous mobile agent. 
\end{abstract}

\vfill

\vfill

\thispagestyle{empty}
\setcounter{page}{0}
\pagebreak

\section{Introduction}

\noindent
{\bf The background.}
A team of at least two mobile agents, starting from different nodes of a network, possibly at different times, have to meet at the same node.
This basic task,  known  as {\em gathering} or {\em rendezvous}, has been thoroughly studied in the literature.
This task has even applications in everyday life, e.g., when agents are people that have to meet in a city whose streets form a network.
In computer science,
mobile agents usually represent software agents in computer networks, or mobile robots, if the network is a labyrinth.
The reason to meet may be to exchange data previously collected by the agents,
or to coordinate some future task, such as network maintenance or finding a map of the network.

\noindent
{\bf The model and the problem.}
The network is modeled as an undirected connected graph, referred to hereafter as a graph. 
We seek gathering algorithms that do not
rely on the knowledge of node labels, and can work in anonymous graphs as well  (cf. \cite{alpern02b}). 
The importance of designing such algorithms
is motivated by the fact that, even when nodes are equipped with distinct labels, agents may be unable to perceive them
because of limited sensory capabilities, 
or nodes may refuse to reveal their labels, e.g., due to security or privacy reasons.
Note that if nodes had distinct labels, then agents might explore the graph and meet in the smallest node, hence gathering would reduce to exploration.
On the other hand, we assume that
edges incident to a node $v$ have distinct labels in 
$\{0,\dots,d-1\}$, where $d$ is the degree of $v$. Thus every undirected
edge $\{u,v\}$ has two labels, which are called its {\em port numbers} at $u$
and at $v$. Port numbering is {\em local}, i.e., there is no relation between
port numbers at $u$ and at $v$. Note that in the absence of port numbers, edges incident to a node
would be undistinguishable for agents and thus gathering would be often impossible, 
as the adversary could prevent an agent from taking some edge incident to the current node.

There are at least two agents that start from different nodes of the graph  and  traverse its edges in synchronous rounds.
They cannot mark visited nodes or traversed edges in any way.
The adversary wakes up some of the agents at possibly different times. A dormant agent, not woken up by the adversary,  is woken up by the first agent that visits
its starting node, if such an agent exists. Agents are anonymous (identical) and they execute the same deterministic algorithm. (Algorithms assuming anonymity of agents may be also used if agents refuse to reveal their identities.)
Every agent starts executing the algorithm in the round of its wake-up.
Agents do not  know the topology of the graph or the size of the team. We consider two scenarios: one when agents know
an upper bound on the size of the graph and another when no bound is known. 
In every round an agent may perform some local computations and move to an adjacent node by a chosen port, or stay at the current node.
When an agent enters a node, it learns its degree and the port of entry. When several agents are at the same node in the same round,
they  can exchange all information they currently have. However, agents that cross each other
on an edge, traversing it simultaneously in different directions, do not notice this fact. Also, agents cannot observe any part of the network except the currently visited node.
We assume that the memory of the agents is unlimited: they can be viewed as 
Turing machines.

An initial configuration of agents, i.e., their placement at some nodes of the graph, is called {\em gatherable} if there exists a deterministic algorithm 
(even only dedicated to this particular configuration) that achieves meeting of all
agents in one node, regardless of the times at which some of the agents are woken up by the adversary. In this paper we study the following gathering problem:
\begin{quote}
Which initial configurations are gatherable and how to gather all of them deterministically by the same algorithm?
\end{quote}
In other words, we want to decide which initial configurations are possible to gather, even by an algorithm specifically designed for this particular
configuration, and we want to find a {\em universal} gathering algorithm that gathers all such configurations. We are interested only in {\em terminating}
algorithms, in which every agent eventually stops forever.

\noindent
{\bf Our results.}
We give a complete solution of the gathering problem in arbitrary networks. We characterize all gatherable configurations and give two {\em universal} deterministic 
gathering algorithms, i.e., algorithms that gather all gatherable configurations. The first algorithm works under the assumption that a common upper bound $N$
on the size of the network is known to all agents. In this case our algorithm guarantees {\em gathering with detection}, i.e., the existence of a round 
for any gatherable configuration, such that  all agents are at the
same node and all declare that gathering is accomplished. If no upper bound on the size of the network is known, we show that a universal algorithm for gathering
with detection does not exist. Hence, for this harder scenario, we construct a second universal gathering algorithm, which guarantees that, for any gatherable
configuration, all agents eventually get to one node and stop, although they cannot tell if gathering is over. The time of the first algorithm is polynomial in the
upper bound $N$ on the size of the network, and the time of the second algorithm is polynomial in the (unknown) size itself.

While gathering two anonymous agents is a relatively easy task (cf. \cite{CKP}),
our problem of gathering an unknown team of anonymous agents  
presents the following major difficulty. The asymmetry of the initial configuration
because of which gathering is feasible, may be caused not only by non-similar locations
of the agents in the graph, but by their different situation {\em with respect to other agents}.
Hence a new algorithmic idea is needed in order to gather: agents
that were initially identical, must make decisions based on the memories
of other agents met to date, in order to distinguish their future behavior. 
In the beginning the memory of each agent is a blank slate and
in the execution of the algorithm it records what the agent has seen in 
previous steps of the navigation and what it heard from other agents during meetings.
Even a slight asymmetry occurring  in a remote part of the graph
must eventually influence the behavior of initially distant agents.
Notice that agents in different initial situations may be unaware of this difference in early meetings, as the difference
may be revealed only later on, after meeting other agents. Hence, for example,  an agent may mistakenly "think" that two different agents
that it met in different stages of the algorithm execution, are the same agent. 
Confusions due to this possibility are a significant
challenge absent both in gathering two (even anonymous) agents and in gathering many labeled agents. 

Our results have an important consequence for the leader election
problem for anonymous agents in arbitrary graphs.
Leader election \cite{Ly} is a fundamental symmetry breaking problem in distributed
computing. Its goal is to assign, in some common round, value 1 (leader) to one of the entities and value 0 (non-leader)
to all others.  For anonymous agents in graphs, leader election turns out to be equivalent to
gathering with detection (see Section 5). Hence, 
as a by-product, we obtain a complete solution of 
the leader election
problem for anonymous agents in arbitrary graphs.


\noindent
{\bf Related work.}
Gathering was mostly studied for two mobile agents and in this case it is usually called rendezvous.
An extensive survey of  randomized rendezvous in various scenarios  can be found in
\cite{alpern02b}, cf. also  \cite{alpern95a,alpern02a,anderson90,israeli}. 
Deterministic rendezvous in networks was surveyed in \cite{Pe}.
Several authors
considered the geometric scenario (rendezvous in an interval of the real line, see, e.g.,  \cite{baston01,gal99},
or in the plane, see, e.g., \cite{anderson98a,anderson98b}).
Gathering more than two agents was studied, e.g., 
in \cite{israeli,lim96}. In~\cite{YY} the authors considered 
rendezvous of many agents with unique labels, and gathering many labeled agents in the presence of Byzantine agents was studied in \cite{DPP}. 
The problem was also studied in the context of multiple robot systems, cf.
\cite{CP05,fpsw}, and fault tolerant gathering of robots in the plane was studied, e.g., in \cite{AP06,CP08}. 

For the deterministic setting a lot of effort was dedicated to the study of the feasibility of rendezvous, and to the time required to achieve this task, when feasible. For instance, deterministic rendezvous with agents equipped with tokens used to mark nodes was considered, e.g., in~\cite{KKSS}. Deterministic rendezvous of two agents that cannot mark nodes but have unique labels was discussed in \cite{DFKP,KM,TSZ14}.
These papers were concerned with the time of rendezvous in arbitrary
graphs. In \cite{DFKP} the authors showed a rendezvous algorithm polynomial in the size of the graph, in the length of the shorter
label and in the delay between the starting time of the agents. In \cite{KM,TSZ14} rendezvous time is polynomial in the first two of these parameters and independent of the delay.

Memory required by two anonymous agents to achieve deterministic rendezvous was studied in \cite{FP,FP2} for trees and in  \cite{CKP} for general graphs.
Memory needed for randomized rendezvous in the ring was discussed, e.g., in~\cite{KKPM11}. 

Apart from the synchronous model used, e.g., in \cite{CKP,DFKP,TSZ14,YY} and in this paper, several authors investigated asynchronous rendezvous in the plane \cite{CFPS,fpsw} and in network environments
\cite{BCGIL,CLP,DGKKP,DPV,GP}.
In the latter scenario the agent chooses the edge which it decides to traverse but the adversary controls the speed of the agent. Under this assumption rendezvous
in a node cannot be guaranteed even in very simple graphs and hence the rendezvous requirement is relaxed to permit the agents to meet inside an edge.
In particular, in \cite{GP} the authors studied asynchronous rendezvous of two anonymous agents. 

Gathering many anonymous agents in a ring  \cite{DDN,KKN,KMP} and in a tree \cite{FIPS} was studied in a different model, where agents 
execute Look-Compute-Move cycles in an asynchronous way and can see positions of other agents during the Look operation. While agents have the advantage
of periodically seeing other agents, they do not have any memory of past events, unlike in our model.

Tree exploration by many agents starting at the same node was studied in \cite{FGKP}.
Some of the techniques of this paper have been later used in our paper \cite{DP} in the context of anonymous graph exploration by many agents.

The leader election problem was introduced in \cite{LL} and has been mostly studied in the scenario
where all nodes have distinct labels and the leader is found among them \cite{FL,HS,P}. 
In \cite{HKMMJ}, the leader election problem was
approached in a model based on mobile agents for networks with labeled nodes.
Many authors \cite{An,AtSn,BV,YK2,YK3} studied leader election
in anonymous networks. In particular, \cite{YK3} characterized message-passing networks in which
leader election among nodes can be achieved when nodes are anonymous.
Memory needed for leader election in unlabeled networks was studied in \cite{FuPe}.

\section{Preliminaries}\label{prelim}

Throughout the paper, 
the number of nodes of a graph is called its size.
In this section we recall four procedures known from the literature, that will be used as building blocks in our algorithms. 
The aim of the first two procedures is graph exploration, i.e., visiting all nodes of the graph by a single agent (cf., e.g., \cite{CDK,Re}). 

{The first of these procedures assumes an upper bound $N$ on the size of the graph. It is based on universal exploration sequences (UXS) and is a corollary of the result of Reingold \cite{Re}. It allows the agent to traverse all nodes of any graph of size at most $N$, starting from any node of this graph, using $P(N)$ edge traversals, where $P$ is some polynomial. After entering a node of degree $d$ by some port $p$,
the agent can compute the port $q$ by which it has to exit; more precisely $q=(p+x_i)\mod d$, where $x_i$ is the corresponding term of the UXS of length $P(N)$. 

The second procedure makes no assumptions on the size but 
it is performed by an agent using a fixed token placed at the starting node of the agent.}
(It is well known that a terminating exploration even of all anonymous rings of unknown size by a single agent without a token is impossible.)
In our applications the roles of the token and of the exploring agent will be played by agents or by groups of agents.

The first procedure works in time polynomial in the known upper bound $N$ on the size of the graph and the second in time polynomial in the size of the graph. At the end of each procedure all nodes of the graph are visited.
Moreover, at the end of the second procedure the agent is with the token and has a complete map of the graph with all port numbers marked.
We call the first procedure $EXPLO(N)$ and the second procedure $EST$, for {\em exploration with a stationary token}.
{We denote by $T(EXPLO(N))$ the maximum time of execution of the procedure  $EXPLO(N)$
in a graph of size at most $N$.}


Before describing the third procedure we define the
following notion from \cite{YK3}. Let $G$ be a graph and $v$ a node of $G$.  
The {\em view} from $v$ is the infinite rooted tree $\cV(v)$ with labeled ports, defined as follows.
Nodes of this tree are all (not necessarily simple) finite paths in $G$ starting at $v$ and represented as the corresponding sequences of port numbers.
The root of the tree is the empty path, and children of path $\pi$ of length $x$ are all paths of length $x+2$, whose prefix is $\pi$.
A child $\rho$ of node $\pi$, such that $\rho=\pi pq$ (with juxtaposition standing for concatenation) is connected with node $\pi$ by an edge
that has port $p$ at node $\pi$ and port $q$ at node $\rho$.

The {\em truncated view} $\cV^l(v)$ at depth $l$ is the rooted subtree of  $\cV(v)$ induced by all nodes at distance  at most $l$ from the root.

The third procedure, described in \cite{CKP},  permits a single anonymous agent starting at node $v$ of an $n$-node graph, to find a  positive integer $S(v) \in \{1,\dots , n\}$,
called the signature of the agent, such that 
$\cV(v)= \cV(w)$ if and only if $S(v)=S(w)$. This procedure, called $SIGN(N)$ works for any graph $G$ of known upper bound $N$ on its size and 
its running time is polynomial in $N$. After the completion of $SIGN(N)$, {graph $G$ has been completely explored (i.e., all the nodes of $G$ have been visited at least once) and} the agent is back at its starting node.
We denote by $T(SIGN(N))$ the maximum time of execution of the procedure $SIGN(N)$ in a graph of size at most $N$.

Finally, the fourth procedure is for gathering two agents in a graph of unknown size. It is due to Ta-Shma and Zwick \cite{TSZ14} and relies on the
fact that agents have distinct labels. (Using it as a building block in our scenario of anonymous agents is one of the difficulties that we need to overcome.)
Each agent knows its own label (which is a parameter of the algorithm) but not the label of the other agent.
We will call this procedure $TZ(\ell)$, where $\ell$ is the label of the executing agent. In  \cite{TSZ14}, the authors give a polynomial $P$
in two variables, increasing in each of the variables,
such that, if there are agents with distinct labels $\ell_1$ and $\ell_2$ operating in a graph of size $n$, appearing at their starting positions in possibly
different times, then they will meet after at most $P(n,|\ell|)$ rounds since the appearance of the later agent, where $\ell$ is the smaller label.
Also, if an agent with label $\ell_i$  performs $TZ(\ell_i)$ for $P(n,|\ell_i|)$ rounds and the other agent is inert during this time, the meeting is guaranteed.

We will also use a notion similar to that of the view but reflecting the positions of agents in an initial configuration.
Consider a graph $G$ and an initial configuration of agents in this graph. Let $v$ be a node occupied by an agent. The {\em enhanced view} from $v$ is the couple 
 $(\cV(v),f)$, where $f$ is a binary valued function defined on the set of nodes of $\cV(v)$, such that $f(\pi)=1$ if there is an agent at the unique node $w$ in $G$
 that is reached from $v$ by path $\pi$, and $f(\pi)=0$ otherwise. (Recall that nodes of $\cV(v)$ are paths in $G$, represented as sequences of port numbers.)
 Thus the enhanced view of an agent additionally marks in its view the positions of other agents in the initial configuration.
 
 An important notion used throughout the paper is the {\em memory} of an agent. 
 Intuitively, the memory of an agent in a given round is the total information the agent collected since its wake-up,
 both by navigating in the graph and by exchanging information with other agents met until this round. This is formalized as follows. The memory of an agent $A$ at
 the end of some round which is  the $t$-th round since its wake-up, is the sequence $(M_0,M_1,\dots ,M_t)$, where $M_0$ is the information of the agent at its start
 and $M_i$ is the information acquired in round $i$. The terms $M_i$ are defined as follows. Suppose that in round $i$ the agent $A$ met agents $A_1,\dots ,A_k$ at node $v$ of degree $d$. Suppose that agent $A$ entered node $v$ in round $i$ leaving an adjacent node $w$ by port $p$ and entering $v$ by port $q$. 
 Suppose that agent $A_j$ entered node $v$ in round $i$ leaving an adjacent node $w_j$ by port $p_j$ and entering $v$ by port $q_j$. If some agent did not move in
 round $i$ then the respective ports are replaced by $-1$. The term $M_i$ for agent $A$, called its $i$-th {\em memory box} is defined as the sequence
 $(d,p,q,\{(p_1,q_1,R_1),\dots, (p_k,q_k,R_k)\})$, where $R_j$ is the memory of the agent $A_j$ at the end of round $i-1$. This definition is recursive with respect to 
  global time (unknown to agents), starting at the wake-up of the earliest agent by the adversary. Note that if an agent is woken up by the adversary 
  at a node of degree $d$ and no other agents are at this node at this time then {$M_0=(d,-1,-1,\emptyset)$} for this agent (at its wake-up the agent sees only the degree of its
  starting position). If an agent is woken up by some other agents, it also learns their memories to date. Note also that, since the memory of the agent is the total  
  information it acquired to date, any deterministic algorithm used by agents to navigate (including any deterministic gathering algorithm) may be viewed as a function
  from the set of all memories into  the set of integers greater or equal to $-1$ telling the agent to take a given port $p$ 
  (or stay idle, in which case the output is $-1$), if it has a given memory. It should be noted that memories of agents may often be (at least) exponential in the size of the graph in which they navigate.
  
  If two agents meet, they must necessarily have different memories. Indeed, if they had identical memories, then they would be woken up in the same round
  and they would traverse identical paths to the meeting node. This contradicts the assumption that agents start at different nodes.

 {We will end this section by introducing the order $\prec$ on the set $\cM$ of all possible memories, and the notion of $Pref_t(\cM)$ that will be used throughout the paper. Let $<_\alpha$ be any linear order on the set of all memory boxes (one such simple order is to code all possible memory boxes as binary sequences in some canonical way and use lexicographic order on binary sequences). Let $<_\beta$ be a lexicographic order on the set $\cM$ based on the order $<_\alpha$. Now the linear order $\prec$ on the set $\cM$ is defined as follows: $\cM_1 \prec \cM_2$ if (1) the number of memory boxes of $\cM_1$ is strictly smaller than that of $\cM_2$ or (2) $\cM_1$ and $\cM_2$ have the same number of memory boxes and $\cM_1 <_\beta \cM_2$. When $\cM_1 \prec \cM_2$ (resp. $\cM_2 \prec \cM_1$), we say memory $\cM_2$ is {\em larger} (resp. {\em smaller}) than $\cM_1$. The order $\prec$ has the property that if the memory of agent $A$ is smaller than the memory of agent $B$ in some round then it will remain smaller in all subsequent rounds. Let $A$ be an agent with a memory $\cM=(M_0,M_1,\dots ,M_s)$ in round $t'$ (with $s\leq t'$): the memory of agent $A$ in round $t$ for $t\leq t'$ is denoted $Pref_t(\cM)$ and is equal to $(M_0,M_1,\dots ,M_{s-(t'-t))}$ if $t'-t<s$, and to the empty word otherwise.}

\section{Known upper bound on the size of the graph}

In this section we assume that a common upper bound $N$ on the size of the graph is known to all agents at the beginning.
The aim of this section is to characterize gatherable configurations and give a {\em universal} gathering algorithm that gathers {\em with detection} all gatherable configurations. The time of such an algorithm is the number of rounds between the wake-up of the first agent and the round when
all agents declare that gathering is accomplished.
Consider the following condition on an initial configuration in an arbitrary graph.
\begin{itemize}
\item
[{\bf G:}] There exist agents with different views, and each agent has a unique enhanced view.
\end{itemize}

We will prove the following result.

\begin{theorem}\label{eq}
An initial configuration is gatherable if and only if it satisfies the condition {\bf G}. If a common upper bound $N$ on the size of the graph is known to all agents then
there exists an algorithm for gathering with detection all gatherable configurations. This algorithm works in time polynomial in $N$.
\end{theorem}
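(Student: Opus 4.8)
The plan is to prove both directions of the characterization, then extract the algorithm and its running time from the ``if'' direction.

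For the \emph{necessity} of condition {\bf G}: suppose some initial configuration violates {\bf G}. If all agents have the same view, then in particular they all have the same enhanced view as well (since the configuration is symmetric), and the adversary can wake them all simultaneously; by a standard symmetry argument, every deterministic algorithm keeps them forever in a symmetric configuration with pairwise distinct occupied nodes, so they never gather — this is the classical obstruction, and I would invoke the argument that two agents with identical memories traverse identical paths (already noted in the Preliminaries). The remaining case is that there exist agents with different views but some agent does not have a unique enhanced view, i.e., two agents $A$, $B$ have the same enhanced view $(\cV(v),f)$. Then the adversary wakes all agents at the same time; I claim that in every round, $A$ and $B$ have the same memory. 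This needs an induction on rounds: agents with equal enhanced view see the same degree sequence, make the same move by determinism, meet ``corresponding'' sets of agents whose enhanced views (hence, inductively, memories) match pairwise, so their memory boxes coincide. Consequently $A$ and $B$ can never be at the same node (two agents at one node have distinct memories), so no algorithm gathers this configuration.

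For the \emph{sufficiency} (and simultaneously the algorithm): assume condition {\bf G} holds and an upper bound $n$ is known. The idea is to have agents build up enough information through meetings so that each can eventually compute its own enhanced view, recognize that its enhanced view is unique (using {\bf G} and the bound $n$, the enhanced view is determined by a finite truncated tree of depth polynomial in $n$, as in the view-equivalence results of \cite{YK3} and the signature procedure $SIGN(n)$), and then all agents deterministically agree on a canonical ``smallest'' enhanced view and navigate to the corresponding agent's location. The building blocks are: $EXPLO(n)$ to let an agent (or a group acting as a compound token) traverse the graph; $SIGN(n)$ to let an agent compute a signature identifying its view; and $TZ(\ell)$ to let two agents with distinct ``labels'' meet, where the labels will be derived from the distinct memories agents accumulate. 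The core algorithmic difficulty — flagged in the introduction — is that agents start identical and their distinguishing information (the asymmetry witnessed by {\bf G}) may live far away, so an agent may temporarily confuse two different agents met at different times; the algorithm must proceed in phases where agents repeatedly meet, exchange their full memories, and re-evaluate, and one must argue that after polynomially many rounds every agent has gathered the truncated enhanced views of all agents, at which point the global asymmetry becomes visible to everyone and a common leader (the unique smallest enhanced view) can be elected, after which everyone walks to it and declares success.

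I expect the main obstacle to be exactly this last point: proving that the information-collection phase terminates in time polynomial in $n$ and that, at its end, every agent holds enough of the enhanced view of every other agent to (i) verify condition {\bf G} holds, (ii) identify the unique minimum enhanced view, and (iii) route to it — all while handling the adversarial wake-up delays and the ``confusion'' phenomenon where an agent cannot initially tell two agents apart. The bound $n$ is essential here: it bounds the depth of enhanced views that need to be compared (two non-isomorphic views in an $n$-node graph already differ within depth polynomial in $n$), it bounds the time of $EXPLO(n)$, $SIGN(n)$, and $TZ$ when the graph size is at most $n$, and it is what ultimately lets gathering be performed \emph{with detection}. The remaining steps — converting accumulated memories into valid distinct labels for $TZ$, and showing the final convergence and simultaneous declaration round — are then bookkeeping built on these ingredients.
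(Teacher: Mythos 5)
Your necessity argument is essentially the paper's: the case of identical views is handled by waking all agents simultaneously and observing that identical memories force identical port sequences (so a first meeting is impossible), and the case of a repeated enhanced view is handled by the induction showing that two agents with the same enhanced view keep identical memories forever and hence can never be collocated. That direction is fine, modulo the same level of detail as the paper's Lemma on non-gatherability.

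The sufficiency direction, however, is where the real content of the theorem lies, and your proposal does not prove it; you yourself flag the key step as ``the main obstacle,'' and that obstacle is not a technicality. Your plan --- each agent computes (a truncation of) its enhanced view, all agents identify the minimum enhanced view and navigate to it --- runs into fundamental problems before any bookkeeping starts. An enhanced view records the \emph{initial} positions of the other agents, and these are not observable by exploration: nodes cannot be marked, agents move, and the only way to learn of another agent's existence is to meet it (directly or transitively through exchanged memories). Hence an agent can never certify that it has learned the positions, or even the number, of all agents --- which is exactly the detection problem the theorem is about, not a lemma you may assume. In addition, the truncated view is an exponential-size object (the paper explicitly avoids computing it and uses $SIGN(n)$ precisely for this reason), so ``compare truncated enhanced views'' is not compatible with the claimed polynomial bound, and the first clause of {\bf G} (existence of two distinct views) is needed to guarantee that $SIGN(n)$ produces at least two labels so that $TZ$ yields meetings at all --- your ``labels derived from accumulated memories'' can be identical for all agents at the start. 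The paper's actual proof takes a different route: it never has agents compute enhanced views; instead it builds explorer/token pairs, uses repeated explorations with backtrack (clean vs.\ non-clean executions, seniority, the $recent\mbox{-}token$ variable) to force mergers, and declares termination after a clean exploration followed by an undisturbed waiting period; condition {\bf G} (uniqueness of enhanced views) enters only in the correctness analysis, via colored views and the argument that the final configuration cannot be a ``clone'' configuration. None of this machinery, nor any substitute for it, appears in your proposal, so the ``if'' direction and the polynomial-time, with-detection claims remain unproven.
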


In order to appreciate the full strength of Theorem \ref{eq} notice that it can be rephrased as follows.  If  an initial configuration does not satisfy
condition {\bf G} then there is no algorithm (even no algorithm dedicated to this specific configuration, knowing it entirely) that permits to gather this configuration, even gather it without detection: simply no algorithm can bring all agents simultaneously to one node. On the other hand, assuming that all agents know a common 
upper bound on the size of the graph, there is a {\em universal} algorithm that gathers {\em with detection} all initial configurations
satisfying condition {\bf G}. Our algorithm works in time polynomial in any commonly known upper bound $N$ on the size of the graph. Hence,  if this upper bound is
polynomial in the size of the graph, the algorithm is polynomial in the size as well. 
In the next section we will show how the positive part of the result changes when no upper bound on the size of the graph is known. 

It is also important to stress the global nature of the second part of condition {\bf G}. While the first part concerns views of individual agents, which depend only on
the position of the agent in the graph, the second part concerns {\em enhanced} views, which depend, for each agent, on the positions of {\em all} agents in the graph.
In our algorithms, the actions of each agent eventually depend on its enhanced view, and hence require learning the initial positions of other agents with respect to the initial position of the given agent.
(Since not only agents but also nodes are anonymous, no ``absolute'' positions can ever be learned).
 This can be done
only by meeting other agents and learning these positions, sometimes indirectly, with agent $A$ acting as intermediary in conveying the initial position of agent $B$ 
to agent $C$. In short, the execution of a gathering algorithm for an agent depends on the actions of other agents. This is in sharp contrast to rendezvous of two anonymous agents, cf., e.g., \cite{CKP}, where actions of each agent depend only on its  view, whose sufficient part (a truncated view) can be constructed
individually by each agent without any help from the outside. This is also in sharp contrast with gathering or other tasks performed by two or more {\em labeled} agents, when actions of each agent depend on its label known by the agent in advance, and simple decisions based on comparison of labels are made after meetings
of subsets of agents when the task involves more than two agents,  cf., e.g., \cite{DPV}. 

The rest of the section is devoted to the proof of Theorem \ref{eq}.
We start with the following lemma.

\begin{lemma}\label{not}
If an initial configuration does not satisfy condition {\bf G}, then it is not gatherable.
\end{lemma}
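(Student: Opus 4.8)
The plan is to show that if the configuration violates \textbf{G}, then already against the simplest adversary --- the one that wakes up \emph{all} agents in the same round, say round $0$ --- no deterministic algorithm can ever bring every agent to a common node; since a configuration is gatherable only if \emph{some} algorithm succeeds against \emph{every} wake-up schedule, this proves the lemma. Negating \textbf{G} gives two (not necessarily disjoint) cases: (i) all occupied nodes have the same view; (ii) there are two occupied nodes $u\neq v$ with $(\cV(u),f)=(\cV(v),f)$; I would treat each separately. Throughout I would use the observation already recorded in Section~\ref{prelim}: two agents sharing a node in some round necessarily have different memories at the end of that round (otherwise, tracing entry ports backwards forces them to have identical starting nodes); equivalently, agents with equal memories in a round occupy distinct nodes in that round.

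For case (i), I would prove by induction on $t$ that, under simultaneous wake-up, all agents have the same memory at the end of round $t$ and occupy nodes that all carry one common view $\cV_t$. At $t=0$ each agent is alone at its starting node and its memory is just the common degree. For the step, equal memories force every agent to apply the same move $p$ (or all to stay idle); their new nodes then share the view that is the port-$p$ subtree of $\cV_t$; the incoming port is the same for all (it is determined by $\cV_t$); and no two agents collide --- a collision would make two distinct $\cV_t$-nodes have equal port-$p$ neighbors reached through the same incoming port, contradicting equality of their views. Hence each agent is again alone and records an identical memory box, closing the induction. By the observation the agents stay at pairwise distinct nodes forever, so gathering never occurs.

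Case (ii) is the substantial one; the idea is to propagate a symmetry between ``the execution as seen from $u$'' and ``the execution as seen from $v$''. First a static fact: the rooted labeled tree isomorphism $\Phi$ realizing $\cV(u)=\cV(v)$ carries the node of $\cV(u)$ reached by a port sequence $\sigma$ to the node of $\cV(v)$ reached by the same $\sigma$, it preserves $f$, and hence on the subtree hanging below any such node it restricts to an isomorphism of enhanced views; consequently, for every agent at a node reachable from $u$ by $\sigma$, the node reachable from $v$ by $\sigma$ also carries an agent, and one with the very same enhanced view. Using this I would prove, by induction on $t$, that under simultaneous wake-up there is a rooted isomorphism $\Psi_t\colon\cV(u_t)\to\cV(v_t)$ (with $u_t,v_t$ the round-$t$ positions of the two distinguished agents $A$ at $u$ and $B$ at $v$) matching, at each pair of corresponding tree nodes, the degree, the port labels, and the multiset of (arrival port, current memory) pairs of the agents then present. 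For $t=0$ this is precisely $(\cV(u),f)=(\cV(v),f)$. For the step, matched memories make corresponding agents perform the same move, so they travel to corresponding nodes through corresponding ports; the round-$(t+1)$ picture at a node is computable from the round-$t$ pictures of that node and its neighbors, and transporting this computation across $\Psi_t$ produces $\Psi_{t+1}$ with the same matching property (the set-versus-multiset bookkeeping being harmless since co-located agents have distinct memories, so no relevant tuple is ever repeated at a node). In particular $A$ and $B$ carry equal memories in every round, hence are never together, and gathering fails.

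The main obstacle is exactly case (ii), for the reason flagged in the introduction: the configuration is symmetric only \emph{locally} around $u$ versus $v$, yet as agents roam and meet one must still guarantee that the evolving pictures seen by $A$ and $B$ remain identical. The static fact --- that enhanced-view equality of $u$ and $v$ forces \emph{matched} agents elsewhere to carry matched enhanced views --- is what makes the local symmetry self-reproducing under the deterministic, view-driven dynamics; making the invariant, together with the memory-and-port bookkeeping inside it, survive meetings of agents is the delicate point.
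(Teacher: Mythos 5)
Your proposal is correct and follows the paper's overall plan: simultaneous wake-up, the same two-case split of the negation of {\bf G}, and an induction showing that a suitable pair of agents can never share a node. The difference is in the second case. The paper pairs each agent with a \emph{homolog} having the same enhanced view and argues tersely that homologs' memories stay equal, asserting without detail that when $A$ meets an agent, $A'$ simultaneously meets a corresponding homolog; your invariant $\Psi_t$ --- an isomorphism of the views from the two distinguished agents' current positions, decorated with the (arrival port, memory) data of all agents present --- makes precisely that step explicit, because the round-to-round update of the decorated configuration is local and hence commutes with $\Psi_t$, with no appeal to how distinct port paths collapse onto common graph nodes. The paper's version buys brevity; yours buys a self-contained propagation step. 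Two small points to tighten: in case (i) the contradiction at a hypothetical collision is with distinctness of port numbers at the collision node (two distinct edges entering it by the same port), not literally with equality of views; and in case (ii) multiset equality at the roots does not by itself identify $A$ with $B$ when other agents are co-located with them, so the invariant should explicitly carry $\cM_A=\cM_B$ (it propagates, since the two roots' ``other agents'' multisets coincide after removing the equal pairs contributed by $A$ and $B$). Neither point affects correctness.
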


\begin{proof}
Suppose that an initial configuration $C$ does not satisfy condition {\bf G} and that agents execute the same deterministic algorithm.
First suppose that the configuration $C$ does not satisfy the first part of condition  {\bf G}, i.e., that the views of all agents are identical.
Suppose that the adversary wakes up all agents in the same round. We show that no pair of agents can meet. Suppose, for contradiction, that 
agents $A_1$ and $A_2$ are the first to meet and that this occurs in round $t$ from the common start. Since the initial views of the agents are identical
and they execute the same deterministic algorithm, the sequences of port numbers encountered by both agents are identical and hence they both
enter the node at which they first meet by the same port. This is a contradiction.

Now suppose that the configuration $C$ does not satisfy the second part of condition  {\bf G}, i.e., that there exists an agent whose enhanced view is not unique. 
Consider distinct agents
$A$ and $A'$ that have identical enhanced views. Let $B$ be any agent and suppose that  $q$ is a sequence of port numbers that leads
from agent $A$ to agent $B$ in the enhanced view of agent $A$. (Notice that there may be many such sequences, corresponding to different paths 
leading from $A$ to $B$.) Then there exists an agent $B'$ such that $q$ is a sequence of port numbers that leads
from agent $A'$ to agent $B'$ in the enhanced view of agent $A'$. Agent $B'$ has the same enhanced view as agent $B$. 
Since agents $A$ and $A'$ were different, agents $B$ and $B'$ are different as well. Hence for every agent there exists another agent whose enhanced view
is identical. Call such agents {\em homologs}.

Suppose again that the adversary wakes up all agents in the same round. We will show that, although now some meetings of agents are possible, homologs will
never meet. Suppose that agents $A$ and $A'$ are homologs. Since $A$ and $A'$ have the same enhanced view at the beginning, it follows by induction
on the round number that their memory will be identical in every round. Indeed, whenever $A$ meets an agent $B$ in some round, its homolog $A'$ meets
a homolog $B'$ of $B$ in the same round and hence the memories of $A$ and $A'$ evolve identically. In particular, since all agents execute the same
deterministic algorithm, agents $A$ and $A'$ follow identical sequences of port numbers on their paths. As before, if they met for the first time at some node, they
would have to enter this node by the same port. This is a contradiction.
 \end{proof}
 
 The other (much more difficult) direction of the equivalence from Theorem \ref{eq} will be shown by constructing an algorithm which, 
 executed by agents starting from any initial
 configuration satisfying condition {\bf G}, accomplishes gathering with detection of all agents. (Our algorithm uses the knowledge of an upper bound $N$ of the size
 of the graph: this is enough to prove the other direction of the equivalence, as for any specific configuration satisfying condition {\bf G} even a gathering
 algorithm dedicated to this specific configuration is enough to show that the configuration is gatherable, and such a dedicated algorithm knows the exact size
 of the graph. Of course, our algorithm accomplishes much more: knowing just an upper bound on the size of the graph it gathers {\em all} configurations satisfying condition {\bf G} and does this {\em with detection}.) We first give a high-level idea of the algorithm, then describe
 it in detail and prove its correctness. From now on we assume that the initial configuration satisfies condition {\bf G}.
 
 \vspace*{0.3cm}
 \noindent
 {\bf Idea of the algorithm.}
 At high level the algorithm works in two stages. The aim of the first stage for any agent is to meet another agent, in order to perform later an exploration 
 in which one of the agents will play the role of the token and the other the role of the explorer (roles will be decided comparing memories of the agents).
 To this end the agent starts an exploration of the graph using the known upper bound $N$ on its size. The aim of this exploration is to wake up all, possibly still dormant
 agents. Afterwards,  in order to meet, 
 agents use the procedure $TZ$ mentioned in Section 2. 
 However, this procedure requires a label  for each agent and our agents are anonymous. One way to differentiate agents
 and give them labels is to find their views: a view (truncated to $N$) from the initial position could serve as a label because the first part of condition {\bf G} guarantees
 that there are at least two distinct views. However, it is {\em a priori} not at all clear how to find the view of an agent (truncated to $N$) in time polynomial in $N$.
 (Recall that the size of the view truncated to $N$ is exponential in $N$.) Hence we use
 procedure $SIGN(N)$ mentioned in Section 2, which is polynomial in $N$ and can still assign labels to agents {after exploring the graph}, producing at least two different labels. 
 Then, performing procedure $TZ$ for a sufficiently long time guarantees a meeting for every agent.
 
 In the second stage each explorer explores the graph using procedure $EXPLO(N)$ and then backtracks to its token
 left at the starting node of the exploration. After the backtrack, memories of the token and of the explorer are updated to check for anomalies caused by other agents
 meeting in the meantime either the token or the explorer. Note that, due to the fact that some agents may have identical memories at this stage of the algorithm, 
 an explorer may sometimes falsely consider another token as its own, due to their identical memories. By contrast, an end of each
 backtrack is a time when an explorer can be sure that it is on its token and the token can be sure that its explorer is with it.  
 
 Explorers repeat these explorations with backtrack again and again, with the aim of creating meetings with other agents and detecting anomalies.
 As a consequence of these anomalies some agents merge with others, mergers being decided on the basis of the memories of the agents.
Each explorer eventually either merges with some token or performs an exploration without anomalies. In the latter case it waits a prescribed amount of time with its token: if
no new agent comes during the waiting time, the end of the gathering is declared, otherwise another exploration is launched. It will be proved that eventually, due to the second part of condition {\bf G}, all agents merge with the same
token $B$ and then, after the last exploration made by the explorer $A$ of $B$ and after undisturbed waiting time, the end of the gathering is correctly declared.

 We now give a detailed description of the algorithm.
 
 \vspace*{0.3cm}
 
 \noindent
 {\bf Algorithm Gathering-with-Detection} with parameter $N$ (upper bound on the size of the graph)
 
  \vspace*{0.2cm}
 
 During the execution of the algorithm an agent can be in one of the following six states: {\tt setup},  {\tt cruiser}, {\tt shadow}, {\tt explorer},  {\tt token}, {\tt searcher},
 depending on its memory.
 For every agent $A$ in state  {\tt shadow} there is exactly one agent $B$ in some state different from  {\tt shadow}, called the {\em guide} of $A$. We will also
 say that $A$ is a shadow of $B$.
 Below we describe the actions of an agent $A$ in each of the states and the transitions between the states. At wake-up agent $A$ enters the state  {\tt setup}.
 
  \vspace*{0.2cm}
 
  \noindent
 {\bf State} {\tt setup}.
 
{Agent $A$ performs $SIGN(N)$ visiting all nodes (and waking up all still dormant agents) and finding the signature of its initial position $v$, called the {\em label} of agent $A$.}
 Agent $A$  transits to state  {\tt cruiser}.
 
  \vspace*{0.2cm}
 
  \noindent
 {\bf State} {\tt cruiser}.
 
 Agent $A$ performs $TZ(\ell)$, where $\ell$ is its label, until meeting an agent in state  {\tt cruiser} or {\tt token} at a node $v$. 
 {When such a meeting occurs, we consider $2$ cases.}

  \noindent
 Case 1.
 Agent $A$  meets an agent $B$ in state {\tt token}.\\
{We consider $2$ subcases}

{Subcase 1.1.}
{Agent $B$ is not with its explorer at the time of the meeting. Then agent $A$ transits to state {\tt shadow} of $B$.}

{Subcase 1.2.}
 {Agent $B$ is with its explorer $C$ at the time of the meeting. Then agent $A$ transits to state {\tt shadow} of $C$.}

  \noindent
 Case 2.
 Agent $A$ does not meet an agent  in state {\tt token}.\\ 
 Then there is at least one other agent in state {\tt cruiser} at node $v$ {(because, as mentionned above, the considered meeting involves an agent in state {\tt cruiser} or {\tt token})}. {We consider $2$ subcases.}
 
 Subcase 2.1.
 Agent $A$  has the largest memory among all agents in state  {\tt cruiser} at node $v$.\\
 Then agent $A$  transits to state {\tt explorer}.
 
 Subcase 2.2.
 Agent $A$  does not have the largest memory among all agents in state  {\tt cruiser} at node $v$.\\
 If there is exactly one agent $B$ in state {\tt cruiser}  with memory larger than $A$ at node $v$, then $A$ transits to state {\tt token}.
 Otherwise, it becomes {\tt shadow} of the agent in state  {\tt cruiser} at node $v$ with largest memory.
 
  \vspace*{0.2cm}
 
 \noindent
 {\bf State}  {\tt shadow}.
 
 Agent $A$ has exactly one guide and is at  the same node as the guide in every round. In every round it makes the same move as the guide.
 If the guide $B$ transits itself to state  {\tt shadow} and gets agent $C$ as its guide, then agent $A$ changes its guide to $C$ as well. 
 Agent $A$ declares that gathering is over if the unique agent in state {\tt explorer} collocated with it makes this declaration.  
 
 Before describing the actions in the three remaining states, we define the notion of {\em seniority} of an agent in state {\tt token} (respectively {\tt explorer}).
The seniority in a given round is the number of rounds from the time when the agent became {\tt token} (respectively {\tt explorer}).

 \vspace*{0.2cm}
 
  \noindent
 {\bf State} {\tt explorer}
 
 When agent $A$ transits to state {\tt explorer}, there is another agent $B$ that transits to state  {\tt token} in the same round at the same node $v$.
 Agent $B$ is called the token of $A$. Agent $A$ has a variable $recent$-$token$ that it initializes to the memory of $B$ in this round. 
 Denote by $EXPLO^*(N)$ the procedure  $EXPLO(N)$ followed by a complete backtrack  in which the agent traverses all edges traversed
in  $EXPLO(N)$ in the reverse order and the reverse direction. The variable $recent$-$token$ is updated in the beginning of each execution of $EXPLO^*(N)$.
 An execution of $EXPLO^*(N)$  is called {\em clean} if 
the following condition is satisfied: in each round  during this execution, in which $A$ met an agent $C$ {that is not in state {\tt shadow}}, the memory of $C$ is equal to that of $B$,
and in each round during this execution, in which the token $B$ was met by an agent $D$, the memory of $D$ was equal to that of $A$. Notice that 
after the execution of  $EXPLO^*(N)$, agent $A$ is together with its token $B$ and thus they can verify if the execution was clean, by inspecting their memories.
The execution time of $EXPLO^*(N)$ is at most $2T(EXPLO(N))$.

 After transiting to state {\tt explorer}, agent $A$ waits for {$T(SIGN(N))+P(N,L)$ rounds}, where $L$ is the largest possible label (it is 
 polynomial in $N$). 
 Then it executes the following protocol:
\newpage
 
  \vspace*{0.2cm}
  \noindent
 {\bf while} $A$ has not declared that gathering is over {\bf do}
 
 \noindent
  \hspace*{0.5cm}{\bf do}\\
   \noindent
 \hspace*{1cm}$EXPLO^*(N)$\\
  \noindent
  \hspace*{1cm}/*now agent $A$ is with its token.*/
 
 \vspace*{0.2cm}
  \noindent
  \hspace*{1cm}{\bf if} {in round $t'$} agent $A$ met an agent $C$ in state {\tt token} of higher\\
   \noindent
  \hspace*{1cm}seniority than that of $A$ or of equal seniority but such
  that \\ 
   \noindent
   \hspace*{1cm}$recent\mbox{-}token\prec Pref_t( \cM_C)$ where $\cM_C$ is the memory of agent $C$ and\\
    \noindent 
    \hspace*{1cm}$t$ is the last round {before $t'$}
    when agent $A$ updated its variable \\
    \noindent 
    \hspace*{1cm}$recent$-$token$ {\bf then} $A$ transits to state {\tt searcher}
    
    \vspace*{0.2cm}
     \noindent
     \hspace*{1cm}{\bf if} $B$ was visited in round $t'$ by an agent $C$ in state {\tt explorer} of higher\\ 
      \noindent 
    \hspace*{1cm}seniority than that of $B$
     or of equal seniority but such that\\
      \noindent 
      \hspace*{1cm}{$Pref_t(\cM_B) \prec R$} where $\cM_B$ is the memory of agent $B$, $t$ is the last \\
       \noindent 
      \hspace*{1cm}{
round before $t'$ when agent $C$ updated its variable $recent$-$token$ and $R$}\\ 
\noindent 
      \hspace*{1cm}{is the variable $recent$-$token$ of agent $C$ in round $t'$} {\bf then} $A$ transits to \\
\noindent 
      \hspace*{1cm}state {\tt searcher}
    
     \vspace*{0.2cm}
     \noindent
       \hspace*{0.5cm}{\bf until} the execution of $EXPLO^*(N)$ is clean\\

     \noindent
       \hspace*{0.5cm}{agent $A$ waits $2 \cdot  T(EXPLO(N))$ rounds: this waiting period is interrupted} \\
       \hspace*{0.5cm}{if $A$ is visited by another agent;} 
       
        \vspace*{0.2cm}
        \noindent
       \hspace*{0.5cm}{{\bf if} the waiting period of $2 \cdot  T(EXPLO(N))$ rounds has expired\\}
       \hspace*{0.5cm}{without any interruption {\bf then} $A$ declares that gathering is over.\\}
\vspace*{0.2cm}
 {{\bf endwhile}}

  \vspace*{0.2cm}
 
  \noindent
{\bf State} {\tt token}
 
  When agent $A$ transits to state {\tt token}, there is another agent $B$ that transits to state  {\tt explorer} in the same round at the same node $v$.
 Agent $B$ is called the explorer of $A$. 
 Agent $A$ remains  idle at a node $v$ and does not change its state, except when its explorer $B$ transits to state {\tt searcher}. In this case it transits
 to state {\tt shadow} and $B$ becomes its guide. 
    Agent $A$ declares that gathering is over if the unique agent in state {\tt explorer} collocated with it makes this declaration.
    
     \vspace*{0.2cm}
 
  \noindent
  {\bf State} {\tt searcher}
    
{Agent $A$ performs an entire execution of $EXPLO^*(N)$ until its termination, regardless of any meetings it could make during this execution. Then the agent starts another execution of $EXPLO^*(N)$ which is stopped as soon as agent $A$ meets an agent $B$ in state {\tt token}.}
{If at the time of the meeting agent $B$ is not with its explorer $C$ then agent $A$ transits to state {\tt shadow} of $B$ ($B$ becomes its guide). Otherwise, agent $A$ transits to state {\tt shadow} of $C$ ($C$ becomes its guide).}

\vspace*{0.5cm}
The proof of the correctness of the algorithm is split into the following lemmas.

\begin{lemma}\label{term}
In Algorithm Gathering-with-Detection every agent eventually stops after time polynomial in $N$ and declares that gathering is over.
\end{lemma}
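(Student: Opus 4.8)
The plan is to track the life-cycle of a single agent through the finite sequence of states it can enter, and bound the time spent in each. First I would observe that every agent starts in state {\tt setup}, spends exactly $T(EXPLO(n))+T(SIGN(n))$ rounds there (by definition of the procedures), and then transits to {\tt cruiser}. In state {\tt cruiser} the agent runs $TZ(\ell)$; here I would invoke the key property of procedure $TZ$ from Section~2 together with the first part of condition {\bf G}: since there exist agents with different views, procedure $SIGN(n)$ assigns at least two distinct labels, and the agent with the smallest label is guaranteed, after all agents have been woken up (which happens within $T(EXPLO(n))$ rounds of the first wake-up, because the {\tt setup} exploration wakes everyone), to be met by some other agent within $P(n,L)$ rounds. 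So within $O(T(EXPLO(n))+T(SIGN(n))+P(n,L))$ rounds every agent leaves {\tt cruiser} and enters one of {\tt shadow}, {\tt explorer}, or {\tt token}. A {\tt shadow} never changes state except to declare termination, and a {\tt token} only transits (to {\tt shadow}) when its explorer becomes a {\tt searcher}; a {\tt searcher} performs at most two full runs of $EXPLO^*(n)$ before becoming a {\tt shadow}. So the only potentially unbounded behavior is that of an {\tt explorer} repeating the {\bf do--until} loop, and an {\tt explorer} that keeps turning into a {\tt searcher}, then a {\tt shadow}, and is "inherited" up a chain of guides.

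The crux, therefore, is to bound (i) the number of iterations of the inner {\bf do--until} loop before a clean execution of $EXPLO^*(n)$ occurs, and (ii) the number of times the explorer--token pair can be destroyed (one of them transiting to {\tt searcher}) before the configuration stabilizes. For (ii) I would argue by a monotonicity/potential argument using the linear order $\prec$ on memories: each time an explorer $A$ transits to {\tt searcher} it is because it met a {\tt token} $C$ of strictly higher seniority, or of equal seniority with strictly larger relevant memory prefix; symmetrically when the token is "preempted". Seniority is non-decreasing for surviving tokens/explorers, and among pairs of equal seniority the comparison is by the fixed order $\prec$, which (as noted in Section~2) is preserved over time. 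Since there are only finitely many agents, the number of distinct (seniority, memory-prefix) values that can trigger a transition is finite, so only finitely many such preemptions occur; after the last one, the surviving explorer--token pair is stable. A searcher that fails to find a token only after its two $EXPLO^*(n)$ runs would contradict the existence of a stable token that it must visit during a full exploration — so every searcher terminates its search in bounded time and becomes a shadow.

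For (i): once an explorer $A$ and its token $B$ form a pair that is never preempted, I would show the {\bf do--until} loop terminates. Each non-clean execution of $EXPLO^*(n)$ means $A$ met an agent $C\neq B$-with-equal-memory during the run, or $B$ was met by such an agent; but every such meeting either causes a preemption (handled in (ii)) or involves an agent that, as a result, merges (becomes a shadow in this pair's orbit) — and there are only finitely many agents, so after finitely many iterations no "anomaly" can occur and an execution is clean. Then $A$ waits $2\cdot T(EXPLO(n))$ rounds; either no new agent arrives and $A$ declares gathering over, or a new agent arrives (again finitely often), and the loop repeats. Combining: the total time is a finite sum of terms each polynomial in $n$, multiplied by the total number of agents (which is some constant, independent of $n$), hence polynomial in $n$. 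Finally, because every agent ends as a {\tt shadow} of — or is itself — the surviving explorer, and the surviving explorer eventually declares gathering over, every agent eventually declares that gathering is over.

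I expect the main obstacle to be the careful bookkeeping in (ii): making precise the claim that "confused" agents (those mistaking a foreign token for their own, or two distinct agents for one) can cause only finitely many preemptions, since these confusions are exactly the subtlety flagged in the introduction. The argument must carefully use the preservation of the order $\prec$ and the monotonicity of seniority to rule out infinite cycles of preemption, and must be coordinated with the merging behavior so that each preemption strictly reduces a well-founded measure (e.g., the number of agents not yet shadows in the eventual orbit, or the lexicographically ordered multiset of (seniority, memory) of surviving explorers). Here I would lean on the second part of condition {\bf G} (unique enhanced views) only for correctness of the \emph{target} of gathering in later lemmas; for termination alone, finiteness of the agent set plus the order-preservation property should suffice.
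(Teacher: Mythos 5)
Your overall decomposition is essentially the paper's own route: bound the time spent in {\tt setup} and {\tt cruiser}, then bound the explorer's {\bf do}--{\bf until} loop by charging each non-clean execution of $EXPLO^*(n)$ to a disturbing agent of bounded lifespan, and absorb {\tt shadow}/{\tt token}/{\tt searcher} agents into the account of their guides. However, two steps are genuinely incomplete. First, your exit from state {\tt cruiser} does not follow as argued: the $TZ$ guarantee applies only to a pair of agents with \emph{distinct} labels (or to an agent running $TZ$ while the other is inert), so the fact that the smallest-label agent is met by someone says nothing about an agent whose label coincides with that of every other agent still in state {\tt cruiser}. The paper closes this with its Claim 1: from the first round in which any agent becomes a token, some agent remains in state {\tt token} and idle forever (among the first tokens, the one with currently largest memory never loses its explorer), and an agent that fails to meet a cruiser within $P(n,\ell)$ rounds must, within another $P(n,\ell)$ rounds, meet that inert token by the second property of $TZ$ --- hence the $2P(n,\ell)$ budget. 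You invoke ``the existence of a stable token'' only for searchers and never establish it; it is needed for cruisers as well, and it is not automatic, since a token is dissolved whenever its explorer turns searcher.

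Second, the heart of the lemma is an explicit \emph{polynomial} bound on the number of non-clean executions of $EXPLO^*(n)$ for an agent that stays in state {\tt explorer} to the end, and you leave exactly this as ``the main obstacle''. Your well-founded-measure sketch yields at best termination, not a polynomial time bound, and the assertion that only finitely many (seniority, memory-prefix) values can trigger transitions is not literally true, since memories keep growing; the simple fact you want is that each agent turns {\tt searcher} at most once and is a {\tt shadow} forever after. The paper's actual argument is a case analysis in which every cause of non-cleanliness is charged to a disturbing agent whose remaining lifespan in the disturbing state overlaps at most two or three consecutive executions of $EXPLO^*(n)$ (a searcher lives at most two such runs; a token or explorer that ``looks like'' the agent's own partner is dissolved within about two runs after being seen), so the at most $n$ agents cause at most $O(n)$ non-clean runs ($19n$ in the paper) plus $O(n)$ interrupted waiting periods, giving the explicit polynomial. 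Also note the number of agents is at most $n$, not ``a constant independent of $n$''; this does not endanger the polynomial bound, but the per-agent constant-disturbance accounting is the quantitative ingredient your proposal is missing.
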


\begin{proof}
At its wake-up an agent $A$ enters state  {\tt setup} and remains in it for at most {$T(SIGN(N))$ rounds} (the time to complete an exploration and find
its label $\ell$) and then transits to state {\tt cruiser}. We will prove that
in state {\tt cruiser} agent $A$ can spend at most  {$T(SIGN(N))+2P(N,\ell)$ rounds}.
We will use the following claim.

\vspace*{0.3cm}
\noindent
{\bf Claim 1.} Let $t$ be the first round{, if any,} in which an agent transits to state {\tt token}. Then there exists an agent $B$ that remains in state {\tt token}
and is idle from round $t$ on.

{To prove the claim, let $Z$ be the set of agents that transited to state {\tt token} in round $t$.  In every round $t'>t$, {each} agent from $Z$ with the current largest memory
remains in state {\tt token} and stays idle. Indeed, the reasons why such an agent, call it $X$, could leave the state token in round $t'$ all lead to a contradiction. There are four such reasons.}

\begin{itemize}
\item{ {Case~1.} The token $X$ was visited by an agent in state {\tt explorer} of higher seniority. We get a contradiction with the fact that the agents belonging to $Z$ have the highest seniority.}

\item{ {Case~2.} The explorer of the token $X$ met an agent $Y$ in state {\tt token} of higher seniority. Since the explorer of $X$ has the same seniority as $X$, by transitivity the seniority of agent $Y$ is higher than that of agent $X$ which is a contradiction with the definition of $Z$.}

\item{ {Case~3.} In round $k< t'$ the token $X$ was visited by an agent $Y$ in state {\tt explorer} of equal seniority but such that $Pref_s(\cM_X) \prec R$ where $\cM_X$ is the memory of agent $X$, $s$ is the last round before $k$ when agent $Y$ updated its variable $recent$-$token$ and $R$ is the variable $recent$-$token$ of agent $Y$ in round $k$. This case is impossible. Indeed, by definition, agent $X$ has one of the highest memories among the agents from $Z$ in round $t'$. Hence, according to the definition of order $\prec$ given in Section~\ref{prelim}, agent $X$ had one of the highest memories among the agents from $Z$ in all rounds between $t$ and $t'$. Moreover, since $Y$ and $X$ have the same seniority, this implies that the token of $Y$ belongs to $Z$. Hence, in round $s$ the memory of agent $X$ is greater than or equal to the memory of the token of $Y$, which is a contradiction with $Pref_s(\cM_X) \prec R$.}

\item{ {Case~4.} In round $k< t'$ the explorer of the token $X$ met an agent $Y$ in state {\tt token} of equal seniority but such that $R \prec Pref_s(\cM_Y)$ where $\cM_Y$ is the memory of agent $Y$, $s$ is the last round before $k$ when the explorer of $X$ updated its variable $recent$-$token$, and $R$ is the variable $recent$-$token$ of the explorer of $X$ in round $k$. Similarly as before, we can get a contradiction with the fact that $R \prec Pref_s(\cM_Y)$.}

\end{itemize}

Since an agent with the largest memory {in $Z$} in a given round must have had the largest memory among the agents in $Z$ in all previous rounds, the claim follows. \finclaim

In order to prove our upper bound on the time spent by $A$ in state {\tt cruiser}, observe that
after at most {$T(SIGN(N))$ rounds} since $A$ transits to state {\tt cruiser}, all other agents have quit state {\tt setup}. 
Consider the additional $2P(N,\ell)$ rounds during which agent $A$ performs $TZ(\ell)$. Let round $\tau$ be the end of the first half
of this segment $S$ of  $2P(N,\ell)$ rounds. Some meeting must have occurred on or before round $\tau$, due to the properties of $TZ$.
If agent $A$ was involved in one of those meetings, it left state {\tt cruiser} by round $\tau$. Otherwise, it must have met some other agent
in state either {\tt cruiser} or {\tt token} during the second half of the segment $S$. Indeed, if it does not meet another agent in state {\tt cruiser},
it must meet another agent in state {\tt token}, which transited to this state by round $\tau$. (Claim 1 guarantees the existence of such an 
agent after round $\tau$.) This proves our upper bound on the time spent by $A$ in state {\tt cruiser}.

From state {\tt cruiser} agent $A$ can transit to one of the three states: {\tt shadow}, {\tt explorer} or {\tt token}. {To deal with the state {\tt shadow}, we need the following claim.}

\vspace*{0.3cm}
\noindent
{{\bf Claim 2.} If agent $A$ becomes the {\tt shadow} of an agent $B$ in some round $t$, then agent $B$ cannot itself switch to state {\tt shadow} in the same round.}

{To prove the claim, there are $3$ cases to consider.}

\begin{itemize}
\item{ {Case~1.} Agent $A$ transits from state {\tt token} to state {\tt shadow} in round $t$. According to the algorithm, agent $B$ is an 
{\tt explorer} transiting to state {\tt searcher} in round $t$.}

\item{{Case~2.} Agent $A$ transits from state {\tt searcher} to state {\tt shadow} in round $t$. According to the algorithm, $B$ is }
\begin{itemize}
\item{ either an agent in state {\tt token} that is not with its explorer in round $t$, in which case agent $B$ remains in state {\tt token} in round $t$}
\item{ or an agent in state {\tt explorer}. However an agent in state {\tt explorer} cannot switch directly to state {\tt shadow}. Hence $B$ cannot transit to
state {\tt shadow} in round $t$. }
\end{itemize}
\item{ {Case~3.} Agent $A$ transits from state {\tt cruiser} to state {\tt shadow} in round $t$. According to the algorithm, $B$ is either in one of the situations described in Case~2, in which case $B$ does not switch to state {\tt shadow} in round $t$, or $B$ is an agent in state {\tt cruiser} that transits to state {\tt explorer}.} 
\end{itemize}

{In all cases, $B$ does not switch to state {\tt shadow} in round $t$, which proves the claim.} \finclaim

{In view of Claim~2 and of the fact that} the termination  conditions for an agent in state 
 {\tt shadow} are the same as of its guide, we may eliminate the case of state  {\tt shadow} from our analysis. 

{Consider an agent $A$ in state {\tt explorer}}. After 
  waiting time of  {$T(SIGN(N))$  $+P(N,L)$ rounds}, where $L$ is the largest possible label (it is polynomial in $N$),
 agent $A$ knows that all other agents have already transited from the state {\tt cruiser}
 (they used at most {$T(SIGN(N))$} rounds in state {\tt setup} and at most  $P(N,L)$ rounds in state  {\tt cruiser}, as their labels are at most $L$ and 
 at least one token is already present in the graph).

{In what follows, we show that, after at most a polynomial time $\rho$, agent $A$ either leaves state {\tt explorer} or declares that gathering is over.}  
 
 
 {In order to prove this, we first compute an upper bound on the number of non-clean explorations $EXPLO^*(N)$ that can be performed by agent $A$ as an explorer. An exploration could be non-clean due to several reasons,
 according to the description of the algorithm.}
 \begin{itemize}
 \item
{{In round $t'$} agent $A$ met an agent $C$ in state {\tt token} of higher seniority than that of $A$, or of equal seniority but such
  that $recent\mbox{-}token\prec Pref_t( \cM_C)$, {where $t$ is the last round before round $t'$ when the variable $recent$-$token$ of $A$ was updated}. According to the algorithm, agent $A$ transits to state {\tt searcher} as soon as it terminates its exploration $EXPLO^*(N)$ after round $t'$. Hence such a meeting can make at most $1$ non-clean exploration.}
 \item
{In round $t'$ the token $B$ of $A$ was visited  by an agent $C$ in state {\tt explorer} of higher seniority than that of $B$,
     or of equal seniority but such that {$Pref_t(\cM_B) \prec R$}, where $R$ is the variable $recent$-$token$ of agent $C$ and $t$ is the last round before round $t'$ when the variable $recent$-$token$ of $C$ was updated. According to the algorithm, agent $A$ transits to state {\tt searcher} as soon as it terminates its exploration $EXPLO^*(N)$ after round $t'$. Hence such a meeting can make at most $1$ non-clean exploration.}
  \item
  Either agent $A$ or its token $B$ met an agent in state {\tt searcher}. Since the lifespan of a searcher is at most the time of two consecutive executions of 
  $EXPLO^*(N)$, it can overlap at most three consecutive executions of this procedure. Hence one searcher can make non-clean at most 6 explorations 
  (3 by meeting $A$ and 3 by meeting $B$). Since there are at most $N$ searchers, this gives at most $6N$ non-clean explorations. 
   \item
 {In round $t'$} agent $A$ met an agent $C$  in state {\tt token} of lower seniority than that of $A$, or of equal seniority but such that {$ Pref_t(\cM_C) \prec recent\mbox{-}token$, where $t$ is the last round before $t'$ when the variable $recent$-$token$ of $A$ was updated}. After this meeting,
   the remaining time when agent $C$ remains in state {\tt token} is at most the duration of one execution of $EXPLO^*(N)$ (after at most this time the explorer of $C$ becomes searcher and hence $C$ transits to state {\tt shadow}). This time can overlap at most two consecutive executions of  $EXPLO^*(N)$,
   hence such meetings can make at most $2N$ non-clean explorations.  
    \item
    {In round $t'$} the token $B$ of $A$ met an agent $C$  in state {\tt explorer} of lower seniority than that of $B$, or of equal seniority but such that $recent\mbox{-}token \prec Pref_t(\cM_B)$ {(where $t$ is the last round before round $t'$ when the variable $recent$-$token$ of $C$ was updated)}.
     A similar analysis as in the previous case shows that such meetings can make at most $2N$ non-clean explorations. 
     \item
     Agent $A$ met an agent $C$ in state {\tt explorer}. The memories of the two agents at this time are different. After this meeting,
   the remaining time when agent $C$ remains in state {\tt explorer} is at most the duration of two consecutive executions of $EXPLO^*(N)$ because after the 
   return of $C$ on its token, the tokens of $A$ and $C$ have different memories and hence after another exploration, $C$ must become a searcher.
   Indeed, since by assumption $A$ remains in state {\tt explorer} till the end of the algorithm, we must have $R \prec Pref_t(\cM_B)$, where $R$ is the variable $recent$-$token$ of $C$ at the time $t$, where $t$ is the first round after the meeting of $A$ and $C$, in which agent $C$ updated its variable $recent$-$token$. 
  This gives at most $3N$ non-clean explorations. 
   \item
   $A$ met an agent $C$  in state {\tt token} in round $s$, that looked like its token $B$ at this time, but that turned out not to be the token $B$ after the backtrack
    of $A$ on $B$. More precisely, $recent\mbox{-}token=Pref_t(\cM_c)$ in round $s$ (where $t$ is the last round {before round $s$} when the variable $recent$-$token$ of $A$ was updated)
    but $Pref_s(\cM_B) \neq Pref_s(\cM_C)$. After round $s$ agent $C$ remains in state {\tt token} for at most the duration of two executions of $EXPLO^*(N)$.
    This gives at most $3N$ non-clean explorations.
    \item
   {In round $t'$ the token $B$ was visited by an agent $C$ of equal seniority in state {\tt explorer} such that $recent\mbox{-}token = Pref_t(\cM_B)$, where $t$ is the last round before $t'$ when the variable $recent$-$token$ of $C$ was updated and this agent turned out not to be $A$ after 
    the backtrack of $A$ on $B$. Similarly as before, this gives at most $3N$ non-clean explorations.}
     \end{itemize}

      Hence there can be at most {$19N+2$} non-clean executions of $EXPLO^*(N)$ for agent $A$ 
     (notice that, e.g., an agent can make non-clean one exploration in the state 
     {\tt explorer} and then in the state {\tt searcher}, hence for simplicity we add all the above upper bounds). A similar analysis shows that during at most {$19N+2$}
     waiting periods of a duration $2T(EXPLO(N))$ agent $A$ can be met by a new agent. Recall that before performing the first  execution of $EXPLO^*(N)$ agent $A$ has been waiting for {$T(SIGN(N))+P(N,L)$ rounds}.
     {Hence if agent $A$ has not left state {\tt explorer} after at most $\rho=T(SIGN(N))+P(N,L) +(38N+5)(4T(EXPLO(N))$ rounds since it transited to state {\tt explorer}, there
     has been a clean execution of $EXPLO^*(N)$ followed by a waiting period without any new agent coming during the period of $\rho$ rounds, and thus agent $A$ declares that gathering is over by the end of this period. Otherwise, agent $A$ transits to state {\tt searcher} before spending $\rho$ rounds in state 
  {\tt explorer}, in which case it uses at most $2\cdot T(EXPLO(N))$ rounds for one execution of $EXPLO^*(N)$ and after
 additional at most  $2\cdot T(EXPLO(N))$ rounds it finds an idle agent $X$ in state {\tt token} (claim 1 guarantees the existence of such an agent): it then becomes 
 the shadow of either $X$ or of the explorer of $X$.}

  It remains to consider an agent $A$ in state {\tt token}. 
  From this state, either at some point the agent transits to state {\tt shadow} or it remains in state {\tt token} till the end of the algorithm. 
  In this latter case, its explorer declares that gathering is over after at most {$T(SIGN(N))+P(N,L) +(38N+5)(4T(EXPLO(N))$ rounds} since it transited to state 
   {\tt explorer}. However, as soon as an explorer declares that gathering is over, its token does the same. So, agent $A$ declares that gathering is over after at most  
 {$T(SIGN(N))+P(N,L) +(38N+5)(4T(EXPLO(N))$ rounds} since it transited to state {\tt token} (recall that, according to the algorithm, agent $A$ and its explorer have reached their current state at the same time).
   
   Hence every agent eventually terminates.
   We conclude by observing that the execution time of the entire algorithm is upper bounded by the sum of the following upper bounds:
   \begin{itemize}
   \item
   the time between the wake up of the first agent and the time of the wake up of an agent $A$ that will be in state {\tt explorer} when declaring that gathering is over; this time is upper bounded by {$T(SIGN(N))$}.
   \item
   the time that such an agent $A$ spends in state {\tt setup} and {\tt cruiser}
   \item
   the time that such an agent $A$ spends in state {\tt explorer}
   \end{itemize}
   
   We have shown above that each of these upper bounds is {$O(T(SIGN(N))+P(N,L)+N \cdot T(EXPLO(N)))$}, where $L$ is polynomial in $N$. Since the values of
 $T(EXPLO(N))$, $T(SIGN(N))$ and $P(N,L)$ are all polynomial in $N$, this proves that the running time of  Algorithm Gathering-with-Detection is polynomial in $N$.    
\end{proof}

In the sequel we will use the following notion, which is a generalization of the enhanced view of a node. Consider a configuration of agents in any round.
Color nodes $v$ and $w$ with the same color if and only if they are occupied by agents $A_1,\dots ,A_r$ and $B_1, \dots ,B_r$, respectively, where $A_i$ and $B_i$
have the same memory in this round. A {\em colored} view from node $v$ is the view from $v$ in which nodes are colored according to the above rule. 

In view of Lemma \ref{term}, all agents eventually declare that gathering is over. Hence the final configuration must consist of agents in states  {\tt explorer} , 
{\tt token} and {\tt shadow}, all situated in nodes $v_1, \dots ,v_k$, such that in each node $v_i$ there is exactly one agent $E_i$ in state {\tt explorer}, exactly one
agent $T_i$ in state {\tt token}  and possibly some agents in state  {\tt shadow}. Call such a final configuration a {\em clone} configuration if there are at least two
distinct nodes $v_i$, $v_j$ which have identical colored views. We will first show that the final configuration cannot be a clone configuration
and then that it must consist of all agents
gathered in a unique node and hence our algorithm is correct.

 \begin{lemma}\label{clone}
 The final configuration cannot be a clone configuration.
\end{lemma}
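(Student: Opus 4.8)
The plan is to argue by contradiction: assume the final configuration is a clone configuration and fix two distinct nodes $v,v'$ with identical colored views. Since the state of an agent is determined by its memory, the memory‑preserving bijection between the agents at $v$ and those at $v'$ matches the explorer $E$ at $v$ with the explorer $E'$ at $v'$, the token $T$ at $v$ with the token $T'$ at $v'$, and the shadows at $v$ with those at $v'$; in particular $\cM_E=\cM_{E'}$ and $\cM_T=\cM_{T'}$ in the final round. As an explorer and its token change state in the same round, $E,E',T,T'$ all have the same seniority in the final round.

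First I would record the consequences of equal memories. Because $\cM_E=\cM_{E'}$, the two agents performed the same number of iterations of the \texttt{explorer} loop, their last executions of $EXPLO^*(n)$ were both clean, and (running the same deterministic algorithm) along those executions they traversed identical port sequences and, in corresponding rounds, met agents with equal memories. I would also use three structural facts: (a) a token, once created, stays idle at its node, and since $E$ and $E'$ never become searchers (they declare gathering over), $T$ stays at $v$ and $T'$ stays at $v'$ from their creation to the end, as does every shadow guided by such a token; (b) after its last clean $EXPLO^*(n)$ an explorer is back at its token's node and never leaves it again; (c) since $EXPLO(n)$ visits every node of a graph of size at most $n$, during its last clean execution $E$ visits $v'$ and $E'$ visits $v$.

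The core of the argument combines cleanness with the two searcher‑triggers of state \texttt{explorer}. When $E$ visits $v'$ it meets the token $T'$; had $T'$ strictly higher seniority than $E$ at that round, the first trigger would have turned $E$ into a \texttt{searcher}, contradicting that $E$ remains an explorer, so the seniority of $T'$ is at most that of $E$ at the meeting round. Symmetrically, when $E'$ visits $v$ it meets $T$, so the seniority of $T$ is at most that of $E'$. Since $E$ and $T$ were created in the same round, and likewise $E'$ and $T'$, comparing the two inequalities yields that $E$ and $E'$ were created in the same global round; hence the whole histories of the two matched families of agents at $v$ and at $v'$ coincide not merely up to memory but in global time. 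With this synchronization I would conclude in two sub‑cases. If there is a shadow at $v$ or at $v'$, then when $E$ (resp. $E'$) visits the other node it meets, besides the token there, a shadow co‑located with that token; that shadow has a memory different from the token's memory (two co‑located agents have distinct memories), yet cleanness forces it to equal the memory of $E$'s own token, which equals the memory of the other token — a contradiction. If there is no shadow at $v$ or at $v'$, I would invoke condition \textbf{G}: since $E$ and $E'$ are distinct agents with identical global histories they keep identical memories forever, hence identical $SIGN(n)$ labels and therefore identical views, and since no agent ever behaves so as to distinguish them their enhanced views coincide, contradicting the uniqueness of enhanced views guaranteed by \textbf{G}.

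The step I expect to be the main obstacle is the synchronization bookkeeping in the seniority comparisons: $E$ and $E'$ have equal memories but, a priori, different wake‑up times, so the seniorities compared by the searcher‑triggers live in shifted time frames. Turning the two symmetric seniority inequalities into equality of creation rounds requires knowing that the tokens $T,T'$ are already in place when $E,E'$ visit the opposite nodes, and that these visits actually occur (inside a clean execution) rather than after the visiting explorer has already terminated — both of which rest on the persistence of tokens from creation to termination and on an explorer never leaving its token after its last clean execution. Handling the edge case in which the two histories turn out to be globally identical, and in which no shadows are present, is where condition \textbf{G} must be used essentially; once the histories are synchronized, the contradiction via a co‑located shadow or via \textbf{G} is comparatively routine.
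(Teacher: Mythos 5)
Your overall route is genuinely different from the paper's (which never looks inside the algorithm at this point: it takes two earliest-woken agents with identical memories and identical colored views and runs a backward induction on the round number, using that equal memories at round $t$ force equal memories at round $t-1$, to conclude that the colored views were already identical at wake-up time, where they coincide with enhanced views, contradicting the second part of {\bf G}). Your attempt has a genuine gap in its decisive case. In the sub-case with no shadows at $v$ or $v'$ you assert that, because $E$ and $E'$ have identical global histories and ``no agent ever behaves so as to distinguish them'', their \emph{enhanced} views coincide. This implication is exactly the content that has to be proved, and it is not a general fact: identical memories throughout an execution do not imply identical enhanced views (for the trivial ``stay idle'' behaviour, all agents at same-degree nodes keep identical memories whatever the enhanced views are), and Lemma \ref{not} only gives the converse direction. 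To get from the final configuration back to a statement about enhanced views of \emph{initial} positions you must use the full hypothesis that the entire colored views from $v$ and $v'$ agree (i.e.\ the memories of \emph{all} agents at corresponding nodes, not just $\cM_E=\cM_{E'}$ and $\cM_T=\cM_{T'}$), and rewind it to the initial round --- which is precisely the paper's backward induction and is absent from your argument. As written, your proof establishes the lemma only when a shadow is present near $v$ or $v'$, not in general.

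Two smaller points. First, your shadow sub-case needs a patch: a shadow located at $v'$ in the final configuration may have $E'$ (not $T'$) as its guide, in which case it is \emph{not} at $v'$ when $E$ passes there; the fix is to note that such a shadow accompanies $E'$ to $v$ and meets $T$, violating the second clause of cleanliness of $E$'s last execution of $EXPLO^*(n)$ instead of the first. Second, the synchronization you labour over via the seniority triggers is immediate: memories are compared in a common (final) round and the length of a memory equals the number of rounds since wake-up, so $\cM_E=\cM_{E'}$ already forces equal wake-up rounds and identical histories in global time; the detour through the searcher-transition conditions is sound but unnecessary.
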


\begin{proof}
Suppose for contradiction that the final configuration in round $f$ contains distinct nodes which have identical colored views.
Let $A$ be one of the agents woken up earliest by the adversary. There exists an agent $A'$ (also woken up earliest by the adversary) which has an
identical memory as $A$ and an identical colored view.
Notice that if two agents have the same memory at time $t$ they must have had the same memory at time $t-1$. Since colors in a colored view
are decided by memories of agents, this implies (by a backward induction on the round number) that the colored views of $A$ and $A'$ are the same in each round 
after their wake-up, and in particular {\em in} the round of their wake-up. In this round no agent has moved yet and hence each agent is in a different node. Hence colored
views in this round correspond to enhanced views. Thus we can conclude that the enhanced views from the initial positions of agents $A$ and $A'$ were identical,
which contradicts the assumption that in the initial configuration every agent has a unique enhanced view.
\end{proof}

  \begin{lemma}\label{one}
   In the final configuration all agents must be at the same node.
   \end{lemma}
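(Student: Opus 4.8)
The plan is to argue by contradiction. Suppose the final configuration has $k\ge 2$ occupied nodes $v_1,\dots,v_k$; by the discussion preceding Lemma~\ref{clone}, each $v_i$ holds exactly one explorer $E_i$, one token $T_i$, and possibly some shadows. I will deduce that two of these nodes have identical colored views, i.e.\ that the final configuration is a clone configuration, contradicting Lemma~\ref{clone}. The one structural fact I use repeatedly is that an agent's state is determined by its memory, so an explorer and a token can never carry equal memories in the same round. Consequently, if an explorer $A$ with token $B$ completes a \emph{clean} execution of $EXPLO^*(n)$, then throughout that execution $A$ is never at the same node as an explorer from outside its own party (meeting such an explorer would force its memory to equal $\cM_B$), and likewise $B$ is never visited by such an explorer. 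Since every $E_i$ in the final configuration declared (Lemma~\ref{term}), it completed some clean $EXPLO^*(n)$, then waited $2T(EXPLO(n))$ rounds undisturbed, then stopped; in particular $E_i$ sits at $v_i$ from the end of its last clean execution onward.

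Let $E_1$ be an explorer whose last clean execution $[a_1,b_1]$ \emph{starts} latest. If some $E_j$ had finished its last clean execution before round $a_1$, it would be waiting or stopped --- and hence present as an explorer --- at $v_j$ throughout $[a_1,b_1]$; but $E_1$ visits $v_j$ during this execution (since $EXPLO(n)$ visits all nodes), so it would meet the explorer $E_j$, contradicting cleanness. Hence at round $a_1$ every $E_j$ is still inside its last clean execution $[a_j,b_j]$, with $a_j\le a_1\le b_j$. Fix $v_2\ne v_1$, and a round $s\in[a_1,b_1]$ at which $E_1$ stands at $v_2$ (such $s$ exists for the same reason). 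By cleanness, every agent at $v_2$ in round $s$ has memory $\cM_{T_1}(s)$; as distinct agents at a node have distinct memories and $T_2$ is there, $T_2$ is the only agent at $v_2$ in round $s$ and $\cM_{T_2}(s)=\cM_{T_1}(s)$. In particular $E_2$ is away from $v_2$ in round $s$, so it is executing some $EXPLO^*(n)$ then; since all its executions preceding $[a_2,b_2]$ ended before $a_2\le s$ and it performs no execution after $[a_2,b_2]$, we get $s\in[a_2,b_2]$, and cleanness of $E_2$'s execution, applied to its token $T_2$ being met by $E_1$ in round $s$, gives $\cM_{E_1}(s)=\cM_{E_2}(s)$.

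Now I invoke the property (already used in the proof of Lemma~\ref{clone}) that equal memories in some round force equal memories in all earlier rounds: from $\cM_{E_1}(s)=\cM_{E_2}(s)$ and $\cM_{T_1}(s)=\cM_{T_2}(s)$ we obtain $\cM_{E_1}(t)=\cM_{E_2}(t)$ and $\cM_{T_1}(t)=\cM_{T_2}(t)$ for all $t\le s$. Hence $E_1$ and $E_2$ have identical histories up to round $s$: their labels --- the signatures of their initial positions computed by $SIGN(n)$ --- coincide, so those initial positions have the same view; following the common trajectory from there up to the round $\tau<a_1$ in which they became explorers, at $v_1$ and $v_2$ respectively, gives $\cV(v_1)=\cV(v_2)$; and, since the current position in the algorithm is a function of the memory, $E_1$ and $E_2$ begin their last clean executions in the same round ($a_1=a_2$) and run the same exploration. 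A simultaneous induction then propagates the memory equality of $E_1,E_2$ and of $T_1,T_2$ forward to the final round: determinism together with $\cV(v_1)=\cV(v_2)$ makes the two explorations isomorphic, the cleanness conditions pin down every agent met along them and every visit to $T_1$ and $T_2$, and the undisturbed waiting periods pin down the rest (any shadows at $v_1,v_2$ being matched similarly --- indeed cleanness forces $v_1$ and $v_2$ to hold only their token and explorer during the relevant rounds). Iterating over all pairs $v_i,v_j$, all the $\cV(v_i)$ coincide and the occupied nodes look the same from $v_1$ and from $v_2$, so these two nodes have identical colored views in the final round: the final configuration is a clone configuration, contradicting Lemma~\ref{clone}. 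As $k\ge 1$, this forces $k=1$, i.e.\ all agents gather at one node.

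The step I expect to be the main obstacle is the last one: carrying the memory equality \emph{forward} to the final round (not merely backward to wake-up) and, intertwined with this, showing that the set of occupied nodes looks identical from $v_1$ and from $v_2$. The pairwise argument above only matches up the explorer/token memories and the views; turning that into a genuine clone configuration requires controlling, round by round, where every agent is, which is exactly where the cleanness conditions and the structure of $EXPLO^*(n)$ (forward traversal plus exact backtrack) must be used most carefully.
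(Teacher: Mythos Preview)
Your setup through the deduction that $\cM_{E_1}(s)=\cM_{E_2}(s)$, $\cM_{T_1}(s)=\cM_{T_2}(s)$, and hence $a_1=a_2$, is sound and essentially reproves the paper's Claim~1 (all tokens share a common memory at the start $\tau_0$ of the last clean exploration). The genuine gap is exactly the step you flag: propagating memory equality of $E_1,E_2$ \emph{forward} past round $s$ in order to conclude that the final configuration is clone. Equal memories at round $t-1$ force both explorers to take the same port and, by $\cV(v_1)=\cV(v_2)$, to reach nodes of the same degree; but cleanness only pins down the \emph{memory} of any agent met, not \emph{whether} a meeting occurs. Nothing in your induction hypothesis rules out that at some round $t>s$ the node $v_1\cdot\pi_t$ is some $v_\ell$ while $v_2\cdot\pi_t$ is unoccupied, in which case $\cM_{E_1}(t)\neq\cM_{E_2}(t)$ and the induction collapses. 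Your appeal to ``iterating over all pairs'' does not close this: establishing that all $\cV(v_i)$ coincide is strictly weaker than establishing that the occupied nodes are placed symmetrically from each $v_i$, which is precisely the colored-view equality you are trying to prove.

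The paper sidesteps forward propagation entirely. After proving (as you essentially do) that all explorers share a memory at $\tau_0$ and that no searchers or travelers remain, it observes (Claim~3) that during the first half of $EXPLO^*(n)$ every explorer visits every node and therefore \emph{sees every token}; since at $\tau_0$ each explorer is with its token, each $E_i$ can reconstruct its full colored view at $\tau_0$ from its memory at $\tau_{1/2}$. The argument then runs in the direction opposite to yours: if $k\ge 2$, the colored views at $\tau_0$ must differ (else the configuration at $\tau_0$, and hence the final one, would be clone, contradicting Lemma~\ref{clone}); so the explorers' memories already differ at $\tau_{1/2}$; and then during the backtrack some $E_j$ visits $T_1$ carrying a memory distinct from $\cM_{E_1}$, violating cleanness of $E_1$'s last exploration. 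The ingredient missing from your attempt is precisely this observation that the memory at $\tau_{1/2}$ determines the colored view at $\tau_0$; once you have it, no forward induction is needed.
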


\begin{proof}
It follows from the formulation of the algorithm that at least one agent transits to state {\tt token}. By Claim 1 in the proof of Lemma \ref{term}, 
there exists an agent that remains in the state {\tt token} till the end of the algorithm. By Lemma \ref{term}, this agent declares that gathering is over.
Let $B$ be the first (or one of the first) agents in state {\tt token} that declares that gathering is over. Let $A$ be its explorer. Let $\tau_0$ be the round in which
agent $A$ starts its last exploration $EXPLO^*(N)$. Let $\tau_{1/2}$ be the round in which backtrack begins during this execution. Let $\tau_1$ be the round in which
this backtrack (and hence the execution of $EXPLO^*(N)$) is finished, and let $\tau_2$ be the round in which $A$ declares that gathering is over.

\vspace*{0.3cm}
\noindent
{\bf Claim 1.} In round $\tau_0$ all agents in state {\tt token} have the same memory.

In order to prove the claim we first show that all agents in state {\tt token} in round $\tau_0$ have the same seniority. Observe that there cannot be any agent
in state {\tt token} of higher seniority than $B$: {at least one of such agents} would be seen by $A$ during its last clean exploration $EXPLO^*(N)$ between rounds $\tau_0$
and $\tau_1$ contradicting its cleanliness. Also there cannot be any agent $C$ in state {\tt token} of lower seniority than $B$. Indeed, let $D$ be the explorer of $C$.
Either $D$ becomes a {\tt searcher} between $\tau_0$ and $\tau_1$ and thus it meets the token $B$ before time $ \tau_2$ which contradicts the
declaration of $A$ and $B$ at time $\tau_2$ or it remains an {\tt explorer}, in which case $C$ remains a {\tt token} between $\tau_0$ and $\tau_1$ and thus $C$ is visited
by $A$ during its last clean exploration,  contradicting its cleanliness. This shows that all agents in state {\tt token} in round $\tau_0$ have the same seniority.
Hence their explorers start and finish $EXPLO^*(N)$ at the same time. Consequently no token existing in round $\tau_0$ can transit to state {\tt shadow}
before round $\tau_1$. Agent $A$ must have seen all these tokens during its last exploration. It follows that the memory of each such token in round $\tau_0$ must
be equal to the memory of $B$ at this time: otherwise, agent $A$ would detect such a discrepancy during its last exploration, which would contradict the 
cleanliness of this exploration. This proves Claim 1.\finclaim

Claim 1 implies that in time $\tau_0$ all agents in state {\tt explorer} have the same memory. Indeed, since at time $\tau_0$ agent $A$ is together with $B$, each explorer
must be with its token, since tokens have the same memory.

 \vspace*{0.3cm}
  \noindent
{\bf Claim 2.} In round $\tau_0$ there are no agents in state {\tt searcher}.

Suppose for contradiction that there is a searcher $S$ in round $\tau_0$. Recall that $S$ performs two explorations: one entire exploration $EXPLO^*(N)$ 
and another partial exploration  $EXPLO^*(N)$ until meeting a token or an explorer.

  \noindent
{\em Case 1.} $S$ finished its first exploration $EXPLO^*(N)$ by round $\tau_0$.\\
Hence its second exploration ends by round $\tau_1$. It could not end by round $\tau_0$ because $S$ would not be a searcher in this round anymore.
If it ended between $\tau_0$ and $\tau_1$, it must have met a token $C$. By Claim 1, all explorers have the same seniority and hence at time $\tau_1$ the 
explorer $D$ of $C$ backtracked to $C$. This exploration is not clean for $D$. Either $D$ becomes a {\tt searcher} at time $\tau _1$ and thus meets $A$ and $B$
before time $\tau_2$, contradicting their declaration at time $\tau_2$, or $D$ starts another  $EXPLO^*(N)$ and it meets itself $A$ and $B$
before time $\tau_2$, contradicting their declaration at time $\tau_2$. This shows that the second exploration of $S$ cannot end between $\tau_0$ and $\tau_1$,
hence Case 1 is impossible.

  \noindent
{\em Case 2.} $S$ finished its first exploration $EXPLO^*(N)$ between $\tau_0$ and $\tau_{1/2}$.\\
Hence it must visit some token $C$ during its second exploration (and before starting the backtrack) by round $\tau_1$. As before, this contradicts 
the declaration of $A$ and $B$ at time $\tau_2$.

  \noindent
{\em Case 3.} $S$ finished its first exploration $EXPLO^*(N)$ between  $\tau_{1/2}$ and $\tau_1$.\\
Hence the entire backtrack during this first exploration took place between rounds  $\tau_0$ and $\tau_1$. During this backtrack, $S$ visited some token.
As before, this contradicts 
the declaration of $A$ and $B$ at time $\tau_2$.

  \noindent
{\em Case 4.} $S$ finished its first exploration $EXPLO^*(N)$ after round $\tau_1$.\\
This is impossible, as it would not be in state {\tt searcher} in round $\tau_0$.

This concludes the proof of Claim 2.\finclaim

\vspace*{0.3cm}
\noindent
{\bf Claim 3.} Let $\E$ be the set of agents in state {\tt explorer} in round $\tau_0$. 
In round $\tau_{1/2}$ every agent from $\E$ can reconstruct its colored view in round  $\tau_0$.

To prove the claim first note that since agent $A$ starts its last exploration in round $\tau_0$ and all agents from $\E$ have the same memory in round $\tau_0$,
they all start an exploration $EXPLO^*(N)$ in this round. In round $\tau_{1/2}$ every agent from $\E$ has visited all nodes of the graph and starts its backtrack.
In round $\tau_0$ there are no agents in state {\tt setup} or {\tt cruiser}, in view of the waiting time when $A$ transited to state {\tt explorer}, and there are no agents
in state {\tt searcher} by Claim 2. Hence the visit of all nodes between rounds $\tau_0$ and $\tau_{1/2}$ permits to see all agents that were tokens at time $\tau_0$.
Since at this time every explorer were with its token, this
permits to reconstruct the memories and the positions of all agents in round $\tau_0$. This is enough to reconstruct the colored views of all agents in round  $\tau_0$, which proves the claim.\finclaim

To conclude the proof of the lemma it is enough to show that in round $\tau_0$ only one node is occupied by agents, since this will be the final configuration.
Suppose that nodes $v\neq v'$ are occupied in this round. Let $A$ be the explorer at $v$ and $A'$ the explorer at $v'$. Note that the colored views of $A$ and $A'$
in round $\tau_0$
must be different, for otherwise the configuration in round $\tau_0$ would be a clone configuration, and consequently the final configuration would also be clone,
contradicting Lemma \ref{clone}. Since, by Claim 3, in round $\tau_{1/2}$ each of the agents $A$ and $A'$ has reconstructed its colored view in round  $\tau_0$, their memories in round $\tau_{1/2}$ are different. Between rounds  $\tau_{1/2}$ and $\tau_1$, during its backtrack, agent $A'$ has visited again all tokens, 
in particular the token of $A$. Hence $A$, after backtracking
to its token in round $\tau_1$,  realizes that another explorer has visited its token, which contradicts the cleanliness of the last exploration of $A$.
This contradiction shows that in round $\tau_0$ only one node is occupied and hence the same is true in the final configuration. This concludes the proof of the lemma.
\end{proof}

Now the proof of Theorem \ref{eq} follows directly from Lemmas  \ref{not}, \ref{term},  and \ref{one}.

\section{Unknown upper bound on the size of the graph}

In this section we show that, if no upper bound on the size of the graph is known, then there is no universal algorithm for gathering
{\em with detection} all gatherable configurations. Nevertheless, we still show in this case a universal algorithm that gathers all gatherable configurations:
all agents from any gatherable configuration eventually stop forever at the same node (although no agent is ever sure
that gathering is over). The time of such an algorithm is the number of rounds between the wake-up of the first agent and the last round in which some agent moves.
Our algorithm is polynomial in the (unknown) size of the graph.

We first prove the following negative result.

\begin{theorem}\label{no}
There is no universal algorithm for gathering with detection all gatherable configurations in all graphs. 
\end{theorem}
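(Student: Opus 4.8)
The plan is to argue by contradiction: assume $\cA$ is a deterministic algorithm that, with no bound on the size of the graph known, achieves gathering with detection for every gatherable configuration and every adversarial wake‑up schedule. The driving idea is that, without a size bound, an agent can never be certain it has inspected the whole graph, so one can always smuggle in an extra, still‑dormant agent of whose existence the others are unaware; its presence makes any premature declaration wrong.

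The first step is a locality lemma of the following shape. If $\cA$ is run on a gatherable configuration $C$ with all agents woken in round $0$ and it declares gathering accomplished in some round $t$, then by round $t$ the agents must have visited every node of the graph. Indeed, if some node $z$ is never visited by round $t$, form $C^{+}$ by adding one more agent at $z$, choosing $z$ so that condition {\bf G} is preserved — this is possible as long as the graph has enough room and asymmetry, since a single extra marked node far from the others neither merges two existing enhanced views nor duplicates one — and let the adversary wake the original agents in round $0$ and keep the new one dormant. The original agents never reach $z$, hence never wake it, hence their memories and trajectories over rounds $0,\dots,t$ coincide with those of the run on $C$; so $\cA$ declares in round $t$ while the extra agent sits dormant at $z$, which is not the gathering node, contradicting correctness of $\cA$ on the gatherable configuration $C^{+}$.

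The second step is to exhibit a family of graphs of unbounded size that is \emph{view‑indistinguishable}: fix a small asymmetric port‑labelled block $B$ and let $G_j$ be the ring obtained by concatenating $j$ copies of $B$ cyclically; then every node of $G_j$ has the same infinite view as the corresponding node of $G_{j'}$, so an agent walking in such a ring can never learn $j$. Placing two agents at two nodes inside one copy of $B$ that have different views gives configurations $C_j$ satisfying {\bf G}, hence gatherable, and the same remains true after adding a dormant agent at a suitable far node. Now run $\cA$ on a fixed small member $C_a$; it declares in some concrete round $t$, and by the locality lemma the two agents have by then visited all $\Theta(a)$ nodes of $G_a$. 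Then pick $b$ enormous compared with $t$: I would argue that during rounds $0,\dots,t$ the two agents of $C_b$ behave exactly as in $C_a$, since the ring is locally identical everywhere and in $G_b$ the agents' displacements over $t$ rounds are confined to a window far shorter than the circumference, so no length‑revealing event (in particular, no wrap‑around re‑meeting of the two agents) can occur; hence $\cA$ again declares in round $t$ on $C_b$. But in $t$ rounds only $O(t)\ll\Theta(b)$ nodes of $G_b$ can have been visited, so some node is unvisited; placing a dormant agent there while preserving {\bf G} and invoking the locality lemma yields the contradiction.

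The main obstacle is precisely the indistinguishability step, made delicate because $\cA$ is arbitrary and adaptive: within its (possibly large) running time $t$ on the small ring it may well send an agent all the way around, or park one agent as a stationary marker while the other performs a full exploration, and this is exactly the behaviour that reveals the ring length and separates $G_a$ from $G_b$. Making the argument go through therefore requires either choosing the block $B$, the small index $a$, the large index $b$, and the placement of the agents inside $B$ carefully enough that no size‑revealing event can happen before round $t$, or else replacing the global ``visits every node'' invariant by a more local one. I expect this to be the heart of the proof; the locality lemma itself and the verification of condition {\bf G} for the various configurations are comparatively routine bookkeeping.
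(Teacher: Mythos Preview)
Your proposal correctly isolates the difficulty but, as written, the two pillars of the argument work against each other and the approach cannot be completed along the lines you suggest. Your locality lemma says that on the small instance $C_a$ the agents must have visited \emph{every} node of $G_a$ by the declaration round $t$; in particular some agent has gone all the way around the ring and has re-encountered the other agent (or its token). That wrap-around is precisely a size-revealing event, and it necessarily occurs before round $t$ on $G_a$. On $G_b$ with $b\gg t$ and still only two agents, that same port sequence does not bring the explorer back to the other agent, so the memories of the two runs diverge strictly before round $t$. Thus your hoped-for indistinguishability fails, and no clever choice of the block $B$, of $a$, or of the agents' placement can avoid this, because your own locality lemma forces the obstruction to arise. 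Replacing the ``visits every node'' invariant by something weaker does not help either: the memories still diverge at the first moment when the small-ring run produces a meeting that the large-ring run does not.

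The paper's proof resolves exactly this tension by changing the large configuration rather than the invariant: instead of two agents on the big graph, it places \emph{many} agents there, periodically, so that the enhanced view from a pair $X',Y'$ far from the single symmetry-breaking node coincides with the enhanced view from $X,Y$ in the small configuration $C$. Concretely, the small configuration is two agents on a decorated $4$-cycle, and the large configuration $D_{8t}$ is a decorated $8t$-cycle with $4t$ agents placed so that every length-$4$ window looks like $C$, together with one extra pendant at a single node $v_0$ to make condition {\bf G} hold. Now every meeting that $X$ has in $C$ (including those caused by wrapping around) is mirrored by a meeting that $X'$ has in $D_{8t}$, not with $Y'$ again but with a \emph{different} agent whose memory equals that of $Y$; an easy induction on the round number then gives $\cM_X=\cM_{X'}$ and $\cM_Y=\cM_{Y'}$ for all rounds up to $t$, so $X'$ and $Y'$ falsely declare. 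The key idea you are missing is therefore not a sharper locality statement but the use of a periodically populated large instance so that the local configuration, \emph{agents included}, is indistinguishable from the small one.
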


\begin{proof}
Consider the following initial configurations. In configuration $C$ the graph is a 4-cycle with clockwise oriented ports 0,1 at each node, and with additional
nodes of degree 1 attached to two non-consecutive nodes. There are two agents starting at a node of degree 2 and at its clockwise neighbor, cf. Fig. 
 \ref{fig:gatherable} (a).
In configuration $D_n$, for $n=4k$, the graph is constructed as follows. Take a cycle of size $n$  with clockwise oriented ports 0,1 at each node. 
Call clockwise consecutive nodes of the cycle $v_0,\dots, v_{n-1}$ (names are used only to explain the construction) 
and attach two nodes of degree 1 to $v_0$ and one node of degree 1 to every other node with even index. Initial positions of agents are at nodes $v_i$, where 
$i=4j$ or $i=4j-1$, for some $j$,  cf. Fig. \ref{fig:gatherable} (b).

\begin{figure}[h!]
        \begin{center}
        \includegraphics[width=0.8\textwidth]{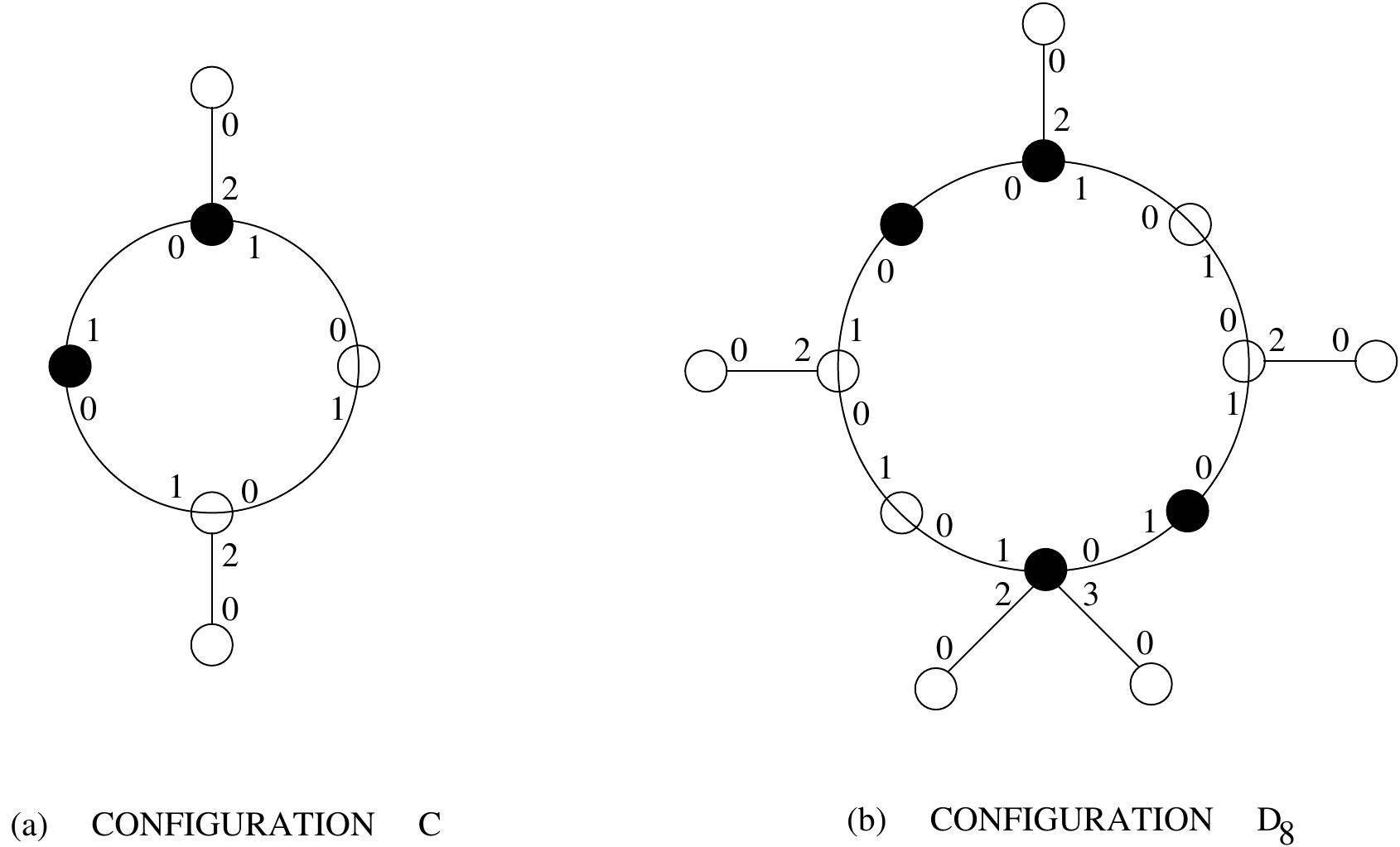}
        \caption{Configurations $C$ and $D_8$ in the proof of
Theorem \ref{no}. Black nodes are occupied by agents}
        \label{fig:gatherable}
        \end{center}
\end{figure}

Each of the configurations $C$ and $D_n$, for $n \geq 8$, is gatherable. Indeed, in each of these configurations there exist agents with different views
(agents starting at nodes of degree 2 and of degree 3) and each agent has a unique enhanced view (this is obvious for configuration $C$ and follows 
from the existence of a unique node of degree 4 for configurations $D_n$). Hence each of these configurations satisfies condition {\bf G} and consequently,
by Theorem \ref{eq}, there is an algorithm for gathering with detection each specific configuration, as such a dedicated algorithm knows 
the configuration and hence may use the knowledge of the size
of the graph.

It remains to show that there is no {\em universal} algorithm that gathers with detection all configurations $C$ and $D_n$. Suppose, for contradiction, that $\cA$ is such 
an algorithm. Suppose that the adversary wakes up all agents simultaneously and let $t$ be the time after which agents in configuration $C$ stop at the same node
and declare that gathering is over. Consider the configuration $D_{8t}$ and two consecutive agents antipodal to the unique node of degree 4, i.e.,   starting from nodes
$v_{4t}$ and $v_{4t-1}$. Call $X$ the agent starting at a node of degree 2 in configuration $C$ and call $Y$ the agent starting at its clockwise neighbor (of degree 3)
in this configuration. Call $X'$ the agent starting at node $v_{4t-1}$ and call $Y'$ the agent starting at node $v_{4t}$ in configuration $D_{8t}$.
(Again names are used only to explain the construction.) 

In the first $t$ rounds of the executions of algorithm $\cA$ starting from configurations $C$ and $D_{8t}$ the memories of the agents $X$ and $X'$ and
of the agents $Y$ and $Y'$ are 
the same. This easily follows by induction on the round number. Hence after $t$ rounds agents $X'$ and $Y'$ starting from configuration  $D_{8t}$ stop and (falsely)
declare that gathering is over. This contradicts universality of algorithm $\cA$.
\end{proof}

Our final result is a universal algorithm gathering all gatherable configurations, working without any additional knowledge. It accomplishes correct gathering
and always terminates but (as opposed to Algorithm Gathering-with-Detection which used an upper bound on the size of the graph), this algorithm does not
have the feature of detecting that gathering is over. We first present a high-level idea of the algorithm, then describe it in detail and prove its correctness.
Recall that we assume that the initial configuration satisfies condition {\bf G} (otherwise gathering, even without detection,  is impossible by Lemma \ref{not}).

 \vspace*{0.2cm}

\noindent
{\bf Idea of the algorithm.}

Since in our present scenario no upper bound on the size of the graph is known, already guaranteeing any meeting between agents must be done 
differently than in Algorithm Gathering-with-Detection. After wake-up each agent proceeds in phases {$i=1,2,\dots$}, where in phase $i$ it ``supposes'' that the graph
has size at most $2^i$.  In each phase an appropriate label based on procedure $SIGN(2^i)$ is computed and procedure $TZ$ is performed sufficiently long to guarantee
a meeting at most at the end of phase {$\lceil \log_2 m \rceil$}, where $m$ is the real size of the graph. If no meeting occurs in some phase for a sufficiently long time, the agent starts the next phase.

Another important difference occurs after the meeting, when one of the agents becomes an explorer and the other its token.  Unlike in the case of known upper bound
on the size of the graph, 
there is no way for any explorer to be sure at any point of the execution that it has already visited the entire graph. Clearly procedure $EXPLO(m)$ cannot give this guarantee, as $m$ is unknown, and procedure $EST$ of exploration with a stationary token, which does not require the knowledge of an upper bound, cannot give this guarantee either, as an explorer cannot be always sure that it visits its own token, because memories of several agents playing the role of the token can be 
identical at various stages of the execution, and hence these ``tokens'' may be undistinguishable for the explorer. 

Nevertheless, our algorithm succeeds in accomplishing the task by using a mechanism which is analogous to the ``butterfly effect''.
Even a slight asymmetry in a remote part of the graph is eventually communicated to all agents and
guarantees that at some point some explorer will visit the entire graph (although in some graphs no explorer can ever be sure of it at any point of an execution) and then all agents will eventually gather at the token of one of these explorers. Making all agents decide on the same token uses property {\bf G}
and is one of the main technical difficulties of the algorithm.

 \vspace*{0.2cm}
 
  \noindent
{\bf Algorithm Gathering-without-Detection}

 Similarly as in Algorithm Gathering-with-Detection,  an agent can be in one of the following five states: {\tt traveler}, {\tt shadow}, {\tt explorer},  {\tt token}, {\tt searcher}.
 State {\tt traveler} partly combines the roles of previous states {\tt setup} and {\tt cruiser}. 
 For every agent $A$ in state  {\tt shadow} the notion of guide is defined as before. 
 Below we describe the actions of an agent $A$ in each of the states and the transitions between the states. At wake-up agent $A$ enters the state  {\tt traveler}.
 
  \vspace*{0.2cm}
 
  \noindent
  {\bf State} {\tt traveler}.
  
  {In this state agent $A$ works in phases  {$i=1,2,\dots$}. In phase $i$ the agent supposes that the graph has size at most $n=2^i$.
 {Agent $A$ performs $SIGN(2^i)$ in order to visit all nodes (and wake up all still dormant agents), if the assumption is correct, and find the current signature of its initial position $v$, called the {\em label} $\ell_{n}=\ell_{2^i}$ of agent $A$.}
 Let $L_{2^i}$ be the maximum possible label of an agent in phase $i$. (Note that $L_{2^i}$ is polynomial in $n$).
 Then  agent $A$ performs $TZ(\ell_n)$ for $\Delta_n$ rounds, where {$\Delta_n=\Delta_{2^i}=T(SIGN(2^i))+2P(2^i,L_{2^i})+ \Sigma_{j=1}^{i-1}Q_j $}, for {$n\geq 4$ (i.e., $i\geq2$), and $\Delta_2=T(SIGN(2))+2P(2,L_2)$. In the formula for $\Delta_n$, $Q_j$ is defined as {$T(SIGN(2^j))+\Delta_{2^j}$} and is an upper bound on the duration of phase $j$. Note that by induction on $i$ we can prove that $\Delta_{2^i}= T(SIGN(2^i))+2P(2^i,L_{2^i}) + \Sigma_{j=1}^{i-1}[2^j(T(SIGN(2^{(i-j)}))+P(2^{(i-j)},L_{2^{(i-j)}}))]$. Hence $\Delta_{n}$ is upper-bounded by\\ $2^ii(T(SIGN(2^i))+2P(2^i,L_{2^i}))=n\log_2(n)(T(SIGN(n))+2P(n,L_{n}))$ which is a polynomial in $n$.}}

 {If no agent has been met during phase $i$, agent $A$ starts phase $i+1$.} As soon as another agent is met in some phase $k$, agent $A$
 interrupts this phase and transits either to state
 {\tt shadow} or {\tt token} or {\tt explorer}. Suppose that the first meeting of agent $A$ occurs in round $t$ at node $v$.
 
 \noindent
 Case 1. There are some agents  in round $t$ at node $v$ which are either in state  {\tt searcher}, or {\tt explorer} or  {\tt token} .\\
 Let $\cH$ be the set of these agents.
 
{ Subcase 1.1.
 There are some agents in $\cH$ that are either in state {\tt explorer} or {\tt token}. Let $\mathcal{I}$ be the set of all those agents in $\cH$.
 Agent $A$ transits to state {\tt shadow} and its guide is the agent having the largest memory in set $\mathcal{I}$.}
 
{Subcase 1.2.
 There is no agent in $\cH$ that is either in state {\tt explorer} or {\tt token}. Agent $A$ transits to state {\tt shadow} and its guide is the agent in state {\tt searcher} having the largest memory in set $\cH$.}

 \noindent
 Case 2. There are only agents in state {\tt traveler} in round $t$ at node $v$.
 
 Subcase 2.1.
 Agent $A$  has the largest memory among all agents in round $t$ at node $v$.\\
 Then agent $A$  transits to state {\tt explorer}.
 
 Subcase 2.2.
 Agent $A$  does not have the largest memory among all agents in round $t$ at node $v$.\\
 If there is exactly one agent $B$  with memory larger than $A$, then agent $A$ transits to state {\tt token}.
 Otherwise, it transits to state {\tt shadow} of the agent  with largest memory.
 
 (Note that cases 1 and 2 cover all possibilities because 
 an agent in state {\tt shadow} always accompanies its guide and this guide cannot be an agent in state {\tt traveler}.)
 
 \vspace*{0.2cm}
  \noindent
 {\bf State} {\tt shadow}.
 
 Agent $A$ has exactly one guide and is at  the same node as the guide in every round. In every round it makes the same move as the guide.
 If the guide $B$ transits itself to state  {\tt shadow} and gets agent $C$ as its guide, then agent $A$ changes its guide to $C$ as well.  

In the description of the actions in the three remaining states, we will use the notion of seniority defined for  Algorithm Gathering-with-Detection.

    \vspace*{0.2cm}
 
  \noindent
    {\bf State} {\tt explorer}.
    
    When agent $A$ transits to state {\tt explorer}, there is another agent $B$ that transits to state  {\tt token} in the same round at the same node $v$.
 Agent $B$ is called the token of $A$. Agent $A$ has a variable $recent$-$token$ that it initializes to the memory of $B$ in this round. 
 
 We first define the notion of a {\em consistent meeting} for agent $A$.
 Let $t$ be the last round when agent $A$ updated its variable $recent$-$token$. 
 A consistent meeting for $A$ is a meeting in round $t'>t$ with an agent $C$ in state  {\tt token}
 of the same seniority as $A$,  such that $\cM$ is the current memory of $C$ and  $Pref_t(\cM)=recent\mbox{-}token$. Intuitively, a consistent meeting is a meeting of
 an agent that $A$ can plausibly consider to be its token $B$. Note that, according to this definition, a meeting in the round when the variable $recent$-$token$ is updated,
 is not a consistent meeting.
 
 We now briefly describe the procedure $EST$ based on \cite{CDK} that will be subsequently adapted to our needs {and which allows an agent to construct a BFS tree of the network provided that it cannot confuse its token with another one.}
The agent constructs a BFS tree rooted at its starting node $r$
marked by the stationary token. In this tree it marks port numbers at all nodes. 
During the BFS traversal, some nodes are added to the BFS tree. {In the beginning, the agent adds the root $r$ and then it makes the {\em process} of $r$. The process of a node $w$ consists in checking all the neighbors of $w$ in order to determine whether 
some of them have to be added to the tree or not. When an agent starts the process of a node $w$, it goes to the neighbor reachable via port $0$ and then checks the neighbor.}

{When a neighbor $v$ of $w$ gets checked, the agent  
verifies if $v$ is equal to some node
previously added to the tree. To do this, for each node $u$ belonging to the current BFS tree, the agent travels from $v$ using the reversal $\overline{q}$ of the shortest path $q$ from $r$ to $u$ in the BFS tree (the path $q$ is
a sequence of port numbers). If at the end of this backtrack it meets the token, then $v=u$: in this case $v$ is not added to the tree as a neighbor of $w$ and is called $w$-{\em rejected}. If not, then $v\neq u$. Whether node $v$ is rejected or not, the agent then comes back to $v$ using the path $q$. If $v$ is different from all the nodes of the BFS tree, then it is added to the tree.}

{Once node $v$ is added to the tree or rejected, the agent makes an edge traversal in order to be located at $w$ and then goes to a non-checked neighbor of $w$, if any. The order, in which the neighbors of $w$ are checked, follows the increasing order of the port numbers of $w$.} 

{When all the neighbors of $w$ are checked, the agent proceeds as follows. Let $X$ be the set of the shortest paths in the BFS tree leading from the root $r$ to a node $y$ having non-checked neighbors. If $X$ is empty then procedure $EST$ is completed. Otherwise, the agent goes to the root $r$, using the shortest path from $w$ to $r$ in the BFS tree, and then goes to a node $y$ having non-checked neighbors, using the lexicographically smallest path from $X$. From there, the agent starts the process of $y$.}

{Note that given a graph $\mathcal{G}$ of size at most $n$, every execution of $EST$ in $\mathcal{G}$ lasts at most $8n^5$ rounds. Indeed,  processing every node $v$ takes at most $4n^3$ rounds (because each node $v$ has at most $n-1$ neighbors, checking a neighbor of $v$ takes at most $2n^2$ rounds, and before (resp. after) each checking of a neighbor $w$ of $v$, the agent makes an edge traversal from $v$ to $w$ (resp. $w$ to $v$)). Considering the fact that there are at most $n$ nodes to process in $\mathcal{G}$ and the fact that moving from a node that has been processed to the next node to process costs at most $2n$ rounds, we get the upper bound of $8n^5$ rounds. Hereafter we define $T(EST(n))$ as being equal to $8n^5$.}

{The procedure $EST'$ is a simulation of $EST$ with the following two changes. The first change concerns the beginning of the execution of $EST'$ when the agent is with its token: it updates its variable $recent$-$token$ w.r.t to the current memory of its token. The second change concerns meetings with the token. Consider a verification if a node $w$ , which is getting checked, is equal to some previously constructed node $u$. This verification consists in traveling from $w$ using the reverse path $\overline{q}$,  where $q$ is the path from the root $r$ to $u$ in the BFS tree and checking the presence of the token. If 
 at the end of the simulation of path $\overline{q}$ in $EST'$ agent $A$ makes a consistent meeting, then it acts as if it saw the token in $EST$; otherwise it acts as if it did not see
 the token in $EST$.}

{To introduce the next proposition, we first need to define the notion of a {\em truncated spanning tree}. We say that a tree $T$ is a truncated spanning tree of a graph $\mathcal{G}$ if $T$ can be obtained from a spanning tree of $\mathcal{G}$ by removing one or more of its subtrees.}

{
\begin{proposition}
\label{prop1}
Given a graph $\mathcal{G}$ of size {of at most $n$} (unknown to the agents), the following two properties hold: (1) every execution of $EST'$ in $\mathcal{G}$ lasts at most $T(EST(n))$ rounds and (2) every execution of $EST'$ produces a spanning tree of $\mathcal{G}$ or a truncated spanning tree of $\mathcal{G}$.
\end{proposition}}

\begin{proof}
{If the agent never confuses its token with another one, the proposition follows directly. So in this proof, we focus only on the situations where there are possible confusions among tokens. When such confusions may occur? }

{The execution of procedure $EST'$ consists of alternating periods of two different types. The first one corresponds to periods when the agent processes a node and the second one corresponds to those when the agent moves to the next node to process it. During the periods of the second type, an agent does not use any token to move: it follows the same path regardless of whether it meets some token or not on its path. Hence, an agent can confuse its token with another one only in periods of the first type. During such periods, an agent may indeed be "mislead" by a token which is not its own token, when verifying whether a node has to be rejected or not, by wrongly rejecting a node. This leads to the construction of a spanning tree which is truncated, which proves the second property of the proposition.}

{Concerning the first property, note that, as for procedure $EST$, every execution of $EST'$ in $\mathcal{G}$ lasts at most $8n^5$. Indeed,  processing every node $v$ also takes at most $4n^3$ rounds (as each node $v$ has at most $n-1$ neighbors, checking a neighbor of $v$ takes at most $2n^2$ rounds, and before (resp. after) each checking of a neighbor $w$ of $v$, the agent makes an edge traversal from $v$ to $w$ (resp. $w$ to $v$)). Besides, still for the same reasons as for procedure $EST$, there are at most $n$ nodes to process in $\mathcal{G}$ and moving from a node that has been processed to the next node to process costs at most $2n$ rounds. Hence, we also obtain the upper bound of $8n^5$ rounds. Since $T(EST(n))=8n^5$, the first property of the proposition follows.}
\end{proof}
 
{The procedure $EST^*$ is a simulation of $EST'$ with the following change.}
 Suppose that the execution of {$EST'$} produced the route $\alpha$ of the agent. In procedure $EST^*$, upon completing
 procedure $EST'$, the agent traverses the reverse route $\overline{ \alpha}$ and then again $\alpha$ and $\overline{ \alpha}$. Hence in procedure $EST^*$
 the agent traverses the concatenation of routes $\alpha,\overline{ \alpha}, \alpha,\overline{ \alpha}$. These parts of the trajectory will be called, respectively, the
 first, second, third and fourth segment of $EST^*$.
 The variable $recent$-$token$ is updated at the beginning of the first and third segment of $EST^*$. Note that in these rounds agent $A$ is certain to be with its token. {The (possibly truncated) spanning tree resulting from the simulation $EST^*$ is the (possibly truncated) spanning tree resulting from the execution of the first segment.}

{In view of the description of procedure $EST^*$ and Proposition~\ref{prop1}, we have the following proposition.}

{
\begin{proposition}
\label{prop2}
Given a graph $\mathcal{G}$ of size $n$ (unknown to the agents), the following two properties hold: (1) every execution of $EST^*$ in $\mathcal{G}$ lasts at most $4T(EST(n))$ rounds and (2) every execution of $EST^*$ produces a spanning tree of $\mathcal{G}$ or a truncated spanning tree of $\mathcal{G}$.
\end{proposition}}
   
 
Similarly as for $EXPLO^*(n)$,  an execution of $EST^*$  is called {\em clean} if 
the following condition is satisfied: in each round  during this execution, in which $A$ met an agent $C$ {that is not in state {\tt shadow}}, the memory of $C$ is equal to that of $B$,
and in each round during this execution, in which the token $B$ was met by an agent $D$, the memory of $D$ was equal to that of $A$. Notice that 
after the execution of  $EST^*$, agent $A$ is together with its token $B$ and thus they can verify if the execution was clean, by inspecting their memories.

 After transiting to state {\tt explorer}, agent $A$ executes the following protocol:
 
 \noindent
 {\bf repeat forever}\\ 
 /*Before the first turn of the loop agent $A$ has just entered state {\tt explorer} and is with its token.
 After each turn of the loop, agent $A$ is with its token, waiting  after a clean exploration.*/ 
 
  \vspace*{0.2cm} 
  \noindent
  \hspace*{0.5cm} {\bf if} $A$ has just transited to state {\tt explorer} {\bf or} $A$ has just been visited by another agent {\bf then}\\
   \noindent
  \hspace*{1cm}{\bf do}\\
   \noindent
 \hspace*{1.5cm}$EST^*$
 
  \vspace*{0.2cm}
  \noindent
  \hspace*{1.5cm}{\bf if} {in round $t'$} agent $A$ met an agent $C$ in state {\tt token} of higher\\ 
   \noindent
  \hspace*{1.5cm}seniority than that of $A$ or of equal seniority but 
   such that\\
   \noindent
   \hspace*{1.5cm}$recent\mbox{-}token\prec Pref_t( \cM_C)$ where $\cM_C$ is the memory of agent $C$\\
    \noindent
    \hspace*{1.5cm}and $t$ is the last
   round before round $t'$ when agent $A$ updated its \\
    \noindent
    \hspace*{1.5cm}variable $recent$-$token$ {\bf then} $A$ transits to state {\tt searcher}
    
     \vspace*{0.2cm}
     \noindent
     \hspace*{1.5cm}{\bf if} {in round $t'$ agent} $A$ met another agent $C$ in state {\tt explorer}, such\\ 
      \noindent
    \hspace*{1.5cm}{that either the seniority of $C$ is higher
     than that of $A$, or these}\\ 
      \noindent
      \hspace*{1.5cm}{seniorities are equal but $Pref_{min(t_A,t_C)}(R_A) \prec Pref_{min(t_A,t_C)}(R_C)$,}\\ 
       \noindent
      \hspace*{1.5cm}{where $R_A$ (resp. $R_C$) is the value of the variable $recent$-$token$ of $A$}\\ 
      \noindent
      \hspace*{1.5cm}{(resp. $C$) at the time of the meeting and $t_A$ (resp. $t_C$) is the}\\
 \noindent
      \hspace*{1.5cm}{last round before $t'$ when agent $A$ (resp. $C$)}\\
 \noindent
      \hspace*{1.5cm}{updated its variable $recent$-$token$
       {\bf then} $A$ transits to state {\tt searcher}}
    
     \vspace*{0.2cm}
     \noindent
     \hspace*{1.5cm}{\bf if} $B$ was visited in round $t'$ by an agent $C$ in state {\tt explorer} of\\ 
     \noindent
      \hspace*{1.5cm}higher seniority than that of $B$
     or of equal seniority but such that\\ 
     \noindent
      \hspace*{1.5cm}{$Pref_t(\cM_B) \prec R$, where $\cM_B$ is the memory of agent $B$,
       $R$ is}\\
      \noindent
      \hspace*{1.5cm}{the variable $recent$-$token$ of agent $C$ in round $t'$, and $t$ is the last}\\
      \noindent
      \hspace*{1.5cm}{round before $t'$ when the variable $R$ was updated}\\
    \noindent
      \hspace*{1.5cm}{\bf then} $A$ transits to state {\tt searcher}\\

     \noindent
     \hspace*{1cm}{\bf until} the execution of $EST^*$ is clean

 \vspace*{0.2cm}
        \noindent
     {\bf State}  {\tt token}.
     
      When agent $A$ transits to state {\tt token}, there is another agent $B$ that transits to state  {\tt explorer} in the same round at the same node $v$.
 Agent $B$ is called the explorer of $A$. 
 Agent $A$ remains  idle at a node $v$ and does not change its state, except when its explorer $B$ transits to state {\tt searcher}. In this case it transits
 to state {\tt shadow} and $B$ becomes its guide. 
  
  \vspace*{0.2cm}
        \noindent
    {\bf State} {\tt searcher}
    
After transiting to state {\tt searcher} agent $A$ performs the sequence of explorations $EXPLO(n)$ for {$n=1,2,3,\dots$}, until it meets an agent in state
{\tt token} or {\tt explorer} in round $t$. Let $\cS$ be the set of these agents met by $A$ in round $t$. Agent $A$ transits to state {\tt shadow} and its guide is the 
agent from $\cS$ with largest memory.   

The analysis of the algorithm is split into the following lemmas.

\begin{lemma}\label{term2}
In Algorithm Gathering-without-Detection every agent eventually stops after time polynomial in the size of the graph.
\end{lemma}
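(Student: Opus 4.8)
The plan is to mirror the structure of the proof of Lemma \ref{term}, but replacing the counting of rounds in absolute terms (which relied on the known bound $n$) by a counting parametrized by the true size $m$ of the graph. Let $m$ be the (unknown) number of nodes. First I would establish that every agent leaves state {\tt traveler} within time polynomial in $m$. In phase $n=m$ the assumption ``the graph has size $n$'' is correct, so $EXPLO(m)$ wakes up all still-dormant agents, $SIGN(m)$ assigns labels producing at least two distinct values (by the first part of condition {\bf G}), and the $TZ(\ell_m)$ portion runs for $\Delta_m$ rounds, which (by the same argument as in Lemma \ref{term}, using the properties of $TZ$ and the analogue of Claim 1 about the persistence of some idle {\tt token}) is long enough to force a meeting: any agent not yet having met someone by the midpoint of the $TZ$ segment of phase $m$ meets a {\tt cruiser}/{\tt traveler} or else a {\tt token} that was created earlier. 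Since $\Delta_i$ and $Q_i$ are polynomial in $i$ and there are $m$ phases, the total time spent in state {\tt traveler} is $\sum_{i=1}^m Q_i$, which is polynomial in $m$. Agents that first meet someone while being {\tt shadow} of a guide inherit the guide's fate, so they can be eliminated from the analysis.

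Next I would bound the lifetime of an agent in state {\tt searcher}: after transiting there it runs $EXPLO(n)$ for $n=1,2,\dots$, and by phase $m$ it has performed a full exploration, so by the analogue of Claim 1 it meets an idle {\tt token} (one is guaranteed to exist and stay idle once some agent has become a token) and transits to {\tt shadow}. Hence a searcher's lifespan is at most $\sum_{i=1}^m Q'_i$ for the appropriate polynomial upper bounds $Q'_i$ on the duration of $EXPLO(i)$, again polynomial in $m$.

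The core of the argument is bounding the number of non-clean executions of $EST^*$ performed by any agent that stays in state {\tt explorer}. I would replay the bulleted case analysis from the proof of Lemma \ref{term} verbatim in spirit: an execution of $EST^*$ can be non-clean only because $A$ or its token $B$ met a {\tt searcher}, or met another {\tt token}/{\tt explorer} of lower (or equal-but-smaller-in-the-relevant-order) seniority, or met an agent that ``looked like'' its token/explorer but was not; in each case the offending agent's remaining time in its current state is bounded by $O(1)$ executions of $EST^*$ (because after returning to its token it either becomes a searcher or discovers a discrepancy), so it can spoil only $O(1)$ of $A$'s explorations; since there are at most $m$ agents of each type, the total number of non-clean explorations of $A$ is $O(m)$. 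The cases involving higher seniority or larger order are impossible under the hypothesis that $A$ remains an explorer forever — exactly as before, $A$ would transit to {\tt searcher}. A symmetric count bounds the number of times $A$ can be visited by a new agent. Therefore after $O(m)$ iterations of the {\bf repeat} loop, no new agent comes and no non-clean exploration occurs, so from that point on the loop body does nothing and $A$ (and its token, and their shadows) stop moving forever. Since each execution of $EST^*$ takes at most $4T(EST(m))$ rounds and $T(EST(m))$ is polynomial in $m$, the whole process is polynomial in $m$.

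The main obstacle I anticipate is the replacement of the ``waiting time'' mechanism of Algorithm Gathering-with-Detection — which there guaranteed, via the known bound, that all competing agents had already left earlier states — by the phase structure: one must argue carefully that by the time the surviving explorer $A$ performs its decisive clean execution of $EST^*$ (in a phase where its simulated $EST$ actually traverses the whole graph, which happens once $A$ has been ``told'' enough about the graph), there are indeed no stragglers still in state {\tt traveler} or {\tt searcher} capable of spoiling it. Handling the subtlety that an explorer never knows when $EST$ has covered the whole graph — so the ``clean execution'' that terminates the loop may occur before $A$ has truly seen everything — requires showing only that the loop terminates (which the $O(m)$ counting gives), while correctness of gathering is deferred to the subsequent lemmas; thus for \emph{this} lemma the obstacle is purely the bookkeeping of the polynomial bounds across phases, which is routine once the case analysis of Lemma \ref{term} is adapted.
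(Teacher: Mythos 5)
Your adaptation of the case analysis from Lemma \ref{term} correctly bounds, for each agent, the number of rounds in which it \emph{moves}: polynomially many rounds in state {\tt traveler} (phases up to $m$), a searcher lifespan of at most $\sum_{i=1}^m T(EXPLO(i))$, and $O(m)$ executions of $EST^*$, each of length at most $4T(EST(m))$, for an agent that remains an explorer. This part matches the paper's proof in structure and in the bookkeeping.

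However, there is a genuine gap at the final step: from ``each agent moves during at most polynomially many rounds'' you jump directly to ``the whole process is polynomial in $m$.'' In Algorithm Gathering-without-Detection an explorer that sits with its token after a clean exploration moves again only when it is visited by another agent, so its polynomially many moving rounds could a priori be spread over an arbitrarily long time span: your counting bounds how many triggering events occur, but not \emph{when} the last one occurs, since the visitor's own movements may in turn be triggered by still later visits, and so on. This is exactly the point the paper flags explicitly (bounding the number of moving rounds ``is not enough''; one must also bound the idle periods between them) and closes with a global argument that your proposal lacks: a claim that if in some round no agent at all moves, then no agent ever moves again (in such a round there is no traveler or searcher, every explorer is idle at its token, and a move in the next round would require some explorer to have visited another token in the frozen round). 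Given this claim, every round before global termination contains at least one moving agent, so the total elapsed time from the first wake-up is at most $m\cdot Y(m)$, where $Y(m)$ is the per-agent polynomial bound on moving rounds and $m$ bounds the number of agents. Without this (or an equivalent) argument, the stated conclusion of the lemma does not follow from your estimates. (A minor additional point: the {\bf repeat} loop never ``terminates''; the agent merely stops moving, so the lemma is about the system freezing, which again is precisely what the missing claim provides.)
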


\begin{proof}
Let $m$ be the size of the graph (unknown to the agents). Let {$i=\lceil \log_2 m \rceil$}. Let $A$ be any agent. 
We may assume that at some point $A$ is woken up (otherwise it would be idle all the time).
{We will first show that $A$ must meet some other agent at the end of phase $i$ at the latest. To this end, we need to prove the following claim.}

\noindent
{{\bf Claim 1.} Let $t$ be the first round, if any, in which an agent transits to state {\tt token}. Then there exists an agent $B$ that remains in state {\tt token} and is idle from round $t$ on. }

{To prove the claim, let $Z$ be the set of agents that transited to state {\tt token} in round $t$. In every round $t'>t$, the agent from $Z$ with the current largest memory remains in state {\tt token} and stays idle. Indeed, the reasons why such an agent, call it $X$, could leave the state {\tt token} in round $t'$ all lead to a contradiction. There are six such reasons: four of them are identical to those given in Claim~1 of the proof of Lemma~\ref{term}. Hence to show the validity of the claim, we only need to deal with the two remaining reasons which are the following ones.}
\begin{itemize}

\item{ The explorer of token $X$, denoted $E$, met an agent $Y$ in state explorer of higher seniority. Since agent $E$ has the same seniority as agent $X$, by transitivity the seniority of agent $Y$ (and of its token) is higher than that of agent $X$, which is a contradiction with the definition of $Z$.}

\item{ In round $k<t'$ the explorer of token $X$, denoted $E$, met an agent $Y$ in state explorer of equal seniority but such that $Pref_{min(t_E,t_Y)}(R_E) \prec Pref_{min(t_E,t_Y)}(R_Y)$, where $R_E$ (resp. $R_Y$) is the value of the variable $recent$-$token$ of $E$ (resp. $Y$) at the time of the meeting and $t_E$ (resp. $t_Y$) is the last round before $k$ when agent $E$ (resp. $Y$) updated its variable $recent$-$token$. This case is impossible. Indeed, by definition, agent $X$ is among the agents having the highest memory among the agents from $Z$ in round $t'$. Hence, according to the definition of order $\prec$ given in Section~\ref{prelim}, agent $X$ was among the agents having the highest memory among the agents from $Z$ in all rounds between $t$ and $t'$. Moreover, since $E$ and $Y$ have the same seniority, this implies that the token of $Y$ belongs to $Z$. Hence, in round $min(t_E,t_Y)$ the memory of agent $X$ is greater than or equal to the memory of the token of $Y$, which is a contradiction with $Pref_{min(t_E,t_Y)}(R_E) \prec Pref_{min(t_E,t_Y)}(R_Y)$.}

\end{itemize}

{Since an agent with the largest memory in $Z$ in a given round must have had the largest memory among the agents in $Z$ in all previous rounds, the claim follows.} \finclaim

Now we are ready to prove the following claim.

\vspace*{0.3cm}
\noindent
{{\bf Claim 2.} Agent $A$ must meet some other agent at the end of phase $i$ at the latest.}

{Assume by contradiction that agent $A$ does not meet any agent by the end of phase $i$. So, there exists at least one agent executing the first {$i$ phases} in state {\tt traveler}.
Let $F$ be the first agent to finish the execution of phase $i$. According to the algorithm, phase $i$ is made up of two parts.
The first one consists in performing $SIGN(2^i)$ and finding the current signature $\ell_{2^i}$ of the initial position of the executing agent. The signature $\ell_{2^i}$ plays the role of the agent's label in the second part of phase $i$ which consists in performing $TZ(\ell_{2^i})$ for {$\Delta_{2^i}= T(SIGN(2^i))+2P(2^i,L_{2^i})+ \Sigma_{j=1}^{i-1}Q_j$ rounds, where $\Sigma_{j=1}^{i-1}Q_j$ is an upper bound on the sum of durations of phases $1$ to $i-1$, and $L_{2^i}$ is the maximum possible label of an agent in phase $i$}. Observe that at the end of the execution by agent $F$, at some round $t$, of the first part of phase $i$, all the agents in the graph are necessarily woken up due to the properties of procedure $SIGN(2^i)$ (as $m\leq 2^i$). Hence, we consider two cases.}
\begin{itemize}
\item{No agent meets another agent by round {$t+ T(SIGN(2^i)) +\Sigma_{j=1}^{i-1}Q_j$}. In that case, we know that from round {$t+1+ T(SIGN(2^i)) +\Sigma_{j=1}^{i-1}Q_j$} on, all the agents execute the second part of phase $i$ and for each of them there remain at least $P(2^i,L_{2^i})$ rounds before the end of the second part of phase $i$. Since the agents cannot all determine the same signature in phase $i$, there are at least two agents having two distinct labels and thus two agents meet by round {$t'=t + T(SIGN(2^i)) + P(2^i,L_{2^i})+\Sigma_{j=1}^{i-1}Q_j$} due to the properties of procedure $TZ$. So, at least one agent transits to state {\tt token} by round $t'$. However, the last period of $P(2^i,L_{2^i})$ rounds when agent $A$ executes the second part of phase $i$ starts after round $t'$. So, in view of the properties of procedure $TZ$ and of Claim~1, we know that agent $A$ meets an agent in state $token$ by the end of phase $i$ if it does not meet an agent in another state before. We get a contradiction with the assumption made at the beginning of this proof.}

\item{At least two agents meet by round {$t+ T(SIGN(2^i)) + \Sigma_{j=1}^{i-1}Q_j$}. In that case, we know that an agent transits to state {\tt token} by this round. Using similar arguments as before, we can prove that agent $A$ meets an agent in state $token$ by the end of phase $i$ if it does not meet an agent in another state before. Again we get a contradiction with the assumption made at the beginning of this proof.}
\end{itemize}

{So, we get a contradiction in all cases. Hence, agent $A$ must meet some other agent at the end of phase $i$ at the latest, which proves the claim.} \finclaim

{According to Claim~2, we know that after time at most {$\Sigma_{j=1}^{i}Q_j=$\\ $\Sigma_{j=1}^{i} T(SIGN(2^j)+\Delta_{2^j}$} agent $A$ transits from state 
{\tt traveler} either to state {\tt shadow} or {\tt token} or {\tt explorer}.} {To deal with the state {\tt shadow}, we need the following claim.}

\vspace*{0.3cm}
\noindent
{{\bf Claim 3.} If agent $A$ becomes the {\tt shadow} of an agent $B$ at some round $t$, then agent $B$ cannot itself switch to state {\tt shadow} in the same round.}

{To prove the claim, there are $3$ cases to consider.}

\begin{itemize}
\item{ {Case~1.} Agent $A$ transits from state {\tt token} to state {\tt shadow} in round $t$. According to the algorithm, agent $B$ is an 
{\tt explorer} transiting to state {\tt searcher} in round $t$.}

\item{ {Case~2.} Agent $A$ transits from state {\tt searcher} to state {\tt shadow} in round $t$. According to the algorithm, $B$ is }
\begin{itemize}
\item{ either an agent in state {\tt explorer}. However, an agent in state {\tt explorer} cannot switch directly to state {\tt shadow}. Hence $B$ cannot transit to
state {\tt shadow} in round $t$. }
\item{ or an agent in state {\tt token}. Let $\mathcal{K}$ be the set of agents in state {\tt explorer} or {\tt token} that are at the same node as $A$ in round $t$. According to the algorithm, agent $B$ is the agent having the highest memory in $\mathcal{K}$. This implies that agent $B$ is not with its {\tt explorer} in round $t$ because an agent in state {\tt explorer} has a higher memory than its {\tt token} (refer to the way they are created from state {\tt traveler}). However, an agent in state {\tt token} may transit to state {\tt shadow} only if it is with its {\tt explorer}. Hence agent $B$ remains in state {\tt token} in round $t$.}
\end{itemize}
\item{ {Case~3.} Agent $A$ transits from state {\tt traveler} to state {\tt shadow} in round $t$. Let $\mathcal{J}$ be the set of the other agents that are at the same node $v$ as agent $A$ in round $t$. We have $3$ subcases to consider.}
\begin{itemize}

\item{  There are only agents in state {\tt traveler} in $\mathcal{J}$.\\ According to the algorithm, agent $B$ is a {\tt traveler} transiting to state {\tt explorer}.}

\item{  There is no agent in state {\tt explorer} or {\tt token} in $\mathcal{J}$ but at least one agent in state {\tt searcher}.\\ According to the algorithm, agent $B$ is an agent in state {\tt searcher} that does not transit to state {\tt shadow} in round $t$ (because an agent in state {\tt searcher} can transit to state {\tt shadow} only if it is with an agent in state {\tt explorer} or {\tt token}).}

\item{  There is at least one agent in state {\tt explorer} or {\tt token} in $\mathcal{J}$.\\ According to the algorithm, agent $B$ is an agent in state {\tt explorer} or {\tt token} that cannot switch to state {\tt shadow} in round $t$ for similar reasons as in Case~2. }

\end{itemize}
 
\end{itemize}

{In all cases, $B$ does not switch to state {\tt shadow} in round $t$, which proves the claim.} \finclaim

{In view of Claim~3 and of the fact that} the termination  conditions for an agent in state 
 {\tt shadow} are the same as of its guide, we may exclude the state {\tt shadow} from our analysis.


Consider an agent in state {\tt explorer}. Either at some point it transits to state {\tt searcher}, in which case, after executing this transition,  it uses at most $\Sigma_{i=1} ^m T(EXPLO(i))$ rounds to perform procedures $EXPLO(i)$
 for $i=1,2,\dots, m$, by which time it must have met some token or explorer {(because at least one token is idle all the time starting from the first round when an agent transits to state token, according to Claim~1)} and hence must have transited to state {\tt shadow},
 or it remains in state {\tt explorer} till the end of the algorithm.
 
{ We will first show that the total number of rounds in which the agent moves as an explorer is polynomial in $m$. This is not enough to show that, after polynomial time, $A$ transits to state {\tt searcher} or remains idle forever (as an explorer), since we still need to bound the duration of each period of idleness
 between any consecutive periods of moving. This will be addressed later.}
 
{Two events can trigger further moves of agent $A$ while it is in state {\tt explorer}}: a meeting causing a non-clean exploration
 $EST^*$ or a visit of $A$ by some agent, when $A$ stays with its token after a clean exploration. 
 
 We first treat the first of these two types of events and bound the total time of explorations caused by them.
 An exploration {made by agent $A$} could be non-clean due to several reasons,
 according to the description of the algorithm.
 \begin{itemize}
 \item
{{In round $s$} agent $A$ met an agent $C$ in state {\tt token} of higher seniority than that of $A$, or of equal seniority but such
  that $recent\mbox{-}token\prec Pref_t( \cM_C)$, {where $t$ is the last round before round $s$ when the variable $recent$-$token$ of $A$ was updated.} According to the algorithm, agent $A$ transits to state {\tt searcher} as soon as it terminates its exploration $EST^*$ after round $s$. Hence such a meeting can cause at most $1$ exploration $EST^*$ of $A$ to be non-clean.}
\item
  {{In round $s$ the token $B$ of $A$ was visited by an agent $C$ in state {\tt explorer} of higher seniority than that of $B$,
     or of equal seniority but such that $Pref_t(\cM_B) \prec R$, where $R$ is the variable $recent$-$token$ of agent $C$, and $t$ is the last round before round $s$ when the variable $recent$-$token$ of $C$ was updated. According to the algorithm, agent $A$ transits to state {\tt searcher} as soon as it terminates its exploration $EST^*$ after round $s$. Hence such a meeting can cause at most $1$ exploration $EST^*$ of $A$ to be non-clean.}}

     \item
     Either agent $A$ or its token $B$ met an agent $C$ in state {\tt traveler}. Since $C$ transits immediately to state {\tt shadow}, all agents in state {\tt traveler}
    {can cause at most $m$ explorations $EST^*$ of agent $A$ to be non-clean.}
       \item
     Either agent $A$ or its token $B$ met an agent $C$ in state {\tt searcher}. Since $C$ transits immediately to state {\tt shadow}, all agents in state {\tt searcher}
   {can cause at most $m$ explorations $EST^*$ of agent $A$ to be non-clean.}
   \item
  {$A$ met an agent $C$  in state {\tt token} of lower seniority than that of $A$, or of equal seniority but such that $ Pref_k(\cM_C) \prec recent\mbox{-}token$, where $k$ is the last round before this meeting when agent $A$ updated its variable $recent\mbox{-}token$. After this meeting,
   the remaining time when agent $C$ remains in state {\tt token} is less than the longest duration of one execution of $EST^*$ (after less than this time the explorer of $C$ becomes searcher and hence $C$ transits to state {\tt shadow}). Thus, as an agent in state {\tt token}, agent $C$ can cause at most $4T(EST(m))$ explorations $EST^*$ of $A$ to be non-clean.
Hence all such meetings can cause at most $m \cdot 4T(EST(m))$ explorations $EST^*$ of $A$ to be non-clean.}

     \item
     {In round $s$} the token $B$ of $A$ met an agent $C$  in state {\tt explorer} of lower seniority than that of $B$, or of equal seniority but such that $recent\mbox{-}token \prec Pref_t(\cM_B)$, {(where $t$ is the last round before $s$ when the variable $recent$-$token$ of $C$ was updated)}.
     A similar analysis as in the previous case shows that such meetings {can cause at most $m \cdot 4T(EST(m))$ explorations $EST^*$ of $A$ to be non-clean.}  
     \item
    {In round $s$ agent $A$ met an agent $C$ in state {\tt explorer} of lower seniority than that of $A$, or of equal seniority but such that $Pref_{min(t_A,t_C)}(R_C) \prec Pref_{min(t_A,t_C)}(R_A)$, where $R_A$ (resp. $R_C$) is the value of the variable $recent$-$token$ of $A$ (resp. $C$) at the time of the meeting and $t_A$ (resp. $t_C$) is the last round before $s$ when agent $A$ (resp. $C$) updated its variable. After the meeting agent $C$ ``loses'', i.e., it will transit to state {\tt searcher} after backtracking
     to its token. Hence agent $C$ remains in state {\tt explorer} for less than $4T(EST(m))$ rounds after the meeting. Similarly as before, such meetings can cause at most $m \cdot 4T(EST(m))$  explorations $EST^*$ of $A$ to be non-clean.}

\item
    {In round $s$ agent $A$ met an agent $C$ in state {\tt explorer} of equal seniority and such that $Pref_{min(t_A,t_C)}(R_A) = Pref_{min(t_A,t_C)}(R_C)$, where $R_A$ (resp. $R_C$) is the value of the variable $recent$-$token$ of $A$ (resp. $C$) at the time of the meeting and $t_A$ (resp. $t_C$) is the last round before $s$ when agent $A$ (resp. $C$) updated its variable $recent$-$token$. Let $t'_A$ (resp. $t'_C$) be the first round since $s$ when agent $A$ (resp. $C$) updates its variable $recent$-$token$. From round $h=max(t'_A,t'_C)$ on, this kind of meeting with agent $C$ cannot occur anymore. Indeed, in view of the fact that the memories of $A$ and $C$ are necessarily different in round $s$ and the fact that once agent $C$ finishes the execution of $EST^*$ involving this meeting, its variable $recent$-$token$ will be updated, we know that  $Pref_{min(t'_A,t'_C)}(R_A) \ne Pref_{min(t'_A,t'_C)}(R_C)$ from round $h$ on. Since the difference between $max(t'_A,t'_C)$ and $s$ is less than the longest duration of one execution of $EST^*$, similarly as before, such meetings can cause at most $m \cdot 4T(EST(m))$  explorations $EST^*$ of $A$ to be non-clean.}

\item
   {In round $s$ agent $A$ met an agent $C$ in state {\tt explorer} of higher seniority than that of $A$ or of equal seniority but such that $Pref_{min(t_A,t_C)}(R_A) \prec Pref_{min(t_A,t_C)}(R_C)$, where $R_A$ (resp. $R_C$) is the value of the variable $recent$-$token$ of $A$ (resp. $C$) at the time of the meeting and $t_A$ (resp. $t_C$) is the last round before $s$ when agent $A$ (resp. $C$) updated its variable. According to the algorithm, agent $A$ transits to state {\tt searcher} as soon as it terminates its exploration $EST^*$ after round $s$. Hence such a meeting can cause at most $1$ exploration $EST^*$ of $A$ to be non-clean.}    
     
   \item
    $A$ met an agent $C$  in state {\tt token} in round $s$, that looked like its token $B$ at this time, but that turned out not to be the token $B$ after the backtrack
    of $A$ on $B$. More precisely, $recent\mbox{-}token=Pref_t(\cM_c)$ in round $s$ (where $t$ is the last round before $s$ when the variable $recent$-$token$ of $A$ was updated)
    but $Pref_s(\cM_B) \neq Pref_s(\cM_C)$. After round $s$ agent $C$ may look like token $B$ of $A$ for {less than}  $4T(EST(m))$ rounds because after {less than} this time
    $A$ backtracks to its token $B$ and, from this time on, it can see the difference between $B$ and $C$.
    Similarly as before such meetings can cause at most {$m \cdot 4T(EST(m))$  explorations $EST^*$ of $A$ to be non-clean.} 
    
    \item
    In round $t'$ the token $B$ was visited by an agent $C$ of equal seniority in state {\tt explorer} such that $recent\mbox{-}token = Pref_t(\cM_B)$, where $t$ is the last round before $t'$ when the variable $recent$-$token$ of $C$ was updated, and this agent turned out not to be $A$ after 
    the backtrack of $A$ on $B$. Similarly as before, such meetings can cause at most {$m \cdot 4T(EST(m))$ explorations $EST^*$ of $A$ to be non-clean.} 
     \end{itemize}
 {Hence the first of the two types of events (meeting causing a non-clean exploration) can cause at most {$(24T(EST(m)) +2)m+3$} explorations $EST^*$ of $A$ to be non-clean. (As before we add up all upper bounds for simplicity). Considering the fact that each non-clean exploration $EST^*$ is directly followed by at most one clean exploration $EST^*$, this kind of meeting can cause at most {$2((24T(EST(m)) +2)m+3)4T(EST(m))$} rounds of motion of $A$.}
 {The second type of events (a visit of $A$ by some agent, when $A$ stays with its token after a clean exploration)
     can cause at most $(2m(1+4T(EST(m)))+1)4T(EST(m))$ rounds of motion of $A$. Indeed, according to the algorithm, a visit of $A$ by some agent $C$ when $A$ stays idle with its token $B$
can be of the following kinds.}

\begin{itemize}
\item {Agents $A$ and $B$ are visited by an agent $C$ in state {\tt traveler}. Since, at this visit, agent $C$ transits immediately to state {\tt shadow}, such visits can trigger at most $m$ exploration $EST^*$ of agent $A$.} 

\item {Agents $A$ and $B$ are visited by an agent $C$ in state {\tt searcher}. Since, at this visit, agent $C$ transits immediately to state {\tt shadow}, such  visits can trigger at most $m$ exploration $EST^*$ of agent $A$.}

\item {In round $s$, agents $A$ and $B$ are visited by an agent $C$ in state {\tt explorer} of lower seniority than that of $B$, or of equal seniority but such that $R_C \prec Pref_{t_C}(\cM_B)$, where $\cM_B$ is the memory of agent $B$ in round $s$, $R_C$ is the variable $recent$-$token$ of agent $C$ in round $s$, and $t_C$ is the last round before $s$ when the variable $R_C$ was updated. After the visit agent $C$ ``loses'', i.e., it will transit to state {\tt searcher} after backtracking
     to its token. Hence agent $C$ remains in state {\tt explorer} for less than $4T(EST(m))$ rounds after the visit and thus, such visits can trigger at most $m \cdot 4T(EST(m))$ explorations $EST^*$ of $A$.}

\item {In round $s$ agents $A$ and $B$ are visited by an agent $C$ in state {\tt explorer} of equal seniority to that of $B$ and such that $R_C = Pref_{t_C}(\cM_B)$, where $\cM_B$ is the memory of agent $B$ in round $s$, $R_C$ is the variable $recent$-$token$ of agent $C$ in round $s$, and $t_C$ is the last round before $s$ when the variable $R_C$ was updated. Let $t'_C$ be the first round after $s$ when agent $C$ updates its variable $recent$-$token$. From round $t'_C$ on, this kind of visit by agent $C$ cannot occur anymore. Indeed, in view of the fact that the memories of $A$ and $C$ are necessarily different in round $s$ and the fact that once agent $C$ finishes the execution of $EST^*$, its variable $recent$-$token$ will be updated, we know that $R_C \ne Pref_{t'_C}(\cM_B)$ from round $t'_C$ on. Since the difference between $t'_C$ and $s$ is less than the longest duration of one execution of $EST^*$, similarly as before, such visits can trigger at most $m \cdot 4T(EST(m))$  explorations $EST^*$ of $A$.}

\item  {In round $s$, agents $A$ and $B$ are visited by an agent $C$ in state {\tt explorer} of higher seniority than that of $A$ or of equal seniority to that of $B$ and such that $Pref_{t_C}(\cM_B) \prec R_C$, where $\cM_B$ is the memory of agent $B$ in round $s$, $R_C$ is the variable $recent$-$token$ of agent $C$ in round $s$, and $t_C$ is the last round before $s$ when the variable $R$ was updated. According to the algorithm, agent $A$ transits to state {\tt searcher} as soon as it terminates its exploration $EST^*$ after round $s$. Hence such visits can trigger at most $1$ exploration $EST^*$ of A.}
\end{itemize}

{Hence adding the first exploration that must be made by $A$ (which is not trigerred by any meeting), we get an upper bound of {$(2((28T(EST(m))+3)m+3)+1) 4T(EST(m))$ rounds} during which agent $A$ moves in state {\tt explorer}.}
      
      It remains to consider an agent in state {\tt token}. It may either transit to state {\tt shadow} or remain in state {\tt token} forever.
      In the latter case it is idle all the time. 
      
      Since {{$\Sigma_{j=1}^{i} T(SIGN(2^j)+\Delta_{2^j}$}, $T(EST(m))$ and $T(EXPLO(m))$ are all polynomial in $m$}, the above
      analysis shows that there exists a polynomial $Y$, such that, for each agent $A$ executing Algorithm Gathering-without-Detection in any graph of size $m$, the 
      number of rounds during which this agent moves is at most $Y(m)$. In order to finish the proof, we need to bound the number of rounds during which an
      agent $A$ can be idle before moving again. To do this we will use the following claim.
      
      \vspace*{0.3cm}
      \noindent
      {{\bf Claim 4.}} If in round $t$ of the execution of Algorithm Gathering-without-Detection no agent moves, then no agent moves in any later round of this execution.
      
      To prove the claim notice that if no agent moves in round $t$, then in this round no agent is in state {\tt traveler} or {\tt searcher}.
      Moreover each agent in state  {\tt explorer} must be idle and stay with its token in this round (all other nodes must be in state {\tt shadow}).
      In order for some agent to move in round $t+1$, some explorer would have to visit some other token in round $t$, contradicting the definition of $t$.
      Hence all agents are idle in round $t+1$. By induction, all agents are idle from round $t$ on. This proves the claim.\finclaim
      
      
      Since for each agent executing Algorithm Gathering-without-Detection in a graph of size $m$, the number of rounds in which it moves is at most $Y(m)$
      and there are at most $m$ agents, {Claim~4} implies that after time at most  $m\cdot Y(m)$ since the wake up of the first agent, all agents must stop forever.
    \end{proof} 
    
    By Lemma \ref{term2} there exists a round after which, according to Algorithm Gathering-without-Detection,  no agent moves.
    Call the resulting configuration {\em final}. The following lemma implies  that  Algorithm Gathering-without-Detection is correct.
    
    \begin{lemma}\label{cor}
    In every final configuration exactly one node is occupied by agents. 
    \end{lemma}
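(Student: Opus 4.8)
The plan is to mirror the known-bound argument: combine the ``no clone configuration'' reasoning of Lemma~\ref{clone} with the single-node reasoning of Lemma~\ref{one}, now with procedure $EST^*$ in the role of $EXPLO^*(n)$. First I would describe the shape of a final configuration. By Lemma~\ref{term2} there is a round after which nobody moves; in the resulting configuration no agent can still be in state {\tt traveler} or {\tt searcher} (each such agent provably leaves that state --- a {\tt searcher} eventually meets a token or an explorer and becomes a {\tt shadow}), every agent in state {\tt explorer} sits idle with its token $B$ at the end of a clean execution of $EST^*$ and is never met again, and every agent in state {\tt token} is collocated with its explorer (a token leaves that state only when its explorer becomes a {\tt searcher}). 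Hence every occupied node carries exactly one explorer, one token, and possibly some shadows. Then, exactly as in Lemma~\ref{clone}, I would argue that a final configuration cannot contain two distinct occupied nodes with identical colored views: colors are determined by memories, memories at round $t$ determine memories at round $t-1$, so a backward induction on the round number would force the enhanced views of two distinct initial positions to coincide, contradicting condition {\bf G}. (This step is the argument of Lemma~\ref{clone} read off for Algorithm Gathering-without-Detection.)

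Next, suppose for contradiction that at least two nodes are occupied in the final configuration. Fix an explorer $A$ with token $B$, let $\tau_0$ be the round in which $A$ began its last execution of $EST^*$, let $\tau_{1/2}$ be the round in which its first segment ends (up to $\tau_{1/2}$ the agent simulates $EST$ from the node of $B$), and let $\tau_1$ be the round in which this $EST^*$ terminates; by the previous paragraph, $A$ and $B$ are never met by a new agent after $\tau_1$. I would now establish the analogues of Claims~1--3 of Lemma~\ref{one}: (i) in round $\tau_0$ there are no agents in state {\tt traveler} or {\tt searcher}, since such an agent carries a memory encoding a state different from {\tt explorer}/{\tt token}, so a meeting with $A$ or $B$ during $[\tau_0,\tau_1]$ would violate cleanliness, and a case analysis on the possible moments of such a meeting (as in Claim~2 of Lemma~\ref{one}) rules out the remaining possibilities; (ii) in round $\tau_0$ every agent in state {\tt token} has the same memory as $B$ and hence every explorer has the same memory as $A$ --- a token whose seniority or memory differs from $B$'s would be seen and detected during $A$'s clean last exploration, and at $\tau_0$ every explorer is with its token; (iii) consequently every explorer present in round $\tau_0$ starts an $EST^*$ in round $\tau_0$, and by visiting all nodes and revisiting all tokens it can reconstruct its own colored view in round $\tau_0$ by round $\tau_{1/2}$. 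Since the final (hence also the round-$\tau_0$) configuration is not a clone configuration, the explorers $A$ and $A'$ at two distinct occupied nodes have different colored views in round $\tau_0$, hence different memories by round $\tau_{1/2}$; then during its backtracking segments $A'$ revisits the token $B$ of $A$ while already carrying a memory different from that of $A$, so upon returning to $B$ at round $\tau_1$ the agent $A$ detects that another explorer visited $B$, contradicting the cleanliness of its last $EST^*$. This contradiction proves the lemma.

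The main obstacle is hidden in points (ii)--(iii): one must justify that a \emph{clean} execution of $EST^*$ actually visits every node and revisits every token. Unlike $EXPLO(n)$, which scans the whole graph no matter how the agents behave, the procedure $EST$ underlying $EST^*$ may ``skip'' a node when the explorer is fooled by a \emph{consistent meeting} with a token that merely looks like its own (identical memory, identical seniority). The heart of the proof is to show that, under condition {\bf G}, no such deception can survive along the distinguished explorer's last exploration --- this is precisely the ``butterfly effect'' mentioned in the high-level description of the algorithm. The argument has to push the cleanliness constraints (every token met by $A$ has memory $\cM_B$, every explorer visiting $B$ has memory $\cM_A$, and the symmetric statements for the competing explorers) through the reversal-path verification of $EST$, and then combine them with the uniqueness of enhanced views: at each potential confusion one identifies which of the two competing explorers would otherwise be driven into state {\tt searcher} --- and would therefore keep moving, contradicting finality --- unless the two relevant regions of the graph are genuinely identical, which condition {\bf G} forbids. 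I expect essentially all of the real work to be concentrated here; with it in place, the two previous paragraphs go through as described.
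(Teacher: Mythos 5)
Your plan to transplant Claims 1--3 of Lemma \ref{one} does not go through, and the place where it breaks is exactly the point you flag and then leave open. With no bound on the size, a clean execution of $EST^*$ gives no guarantee that the explorer has visited every node or seen every token: the BFS construction underlying $EST$ can be cut short by consistent meetings with foreign tokens whose memories happen to match $recent-token$, and cleanliness is perfectly compatible with this, since every agent actually met may have the right memory while whole regions of the graph are never entered. Consequently your claim (ii) (all tokens have the memory of $B$ in round $\tau_0$) and claim (iii) (every explorer reconstructs its colored view by round $\tau_{1/2}$) are not steps you can prove as stated; (ii) is essentially the hard content of the lemma itself, and it is false in the intermediate situations the proof must handle --- the paper's proof explicitly treats final configurations in which explorers, and hence their tokens, have different memories. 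Your claim (i) also fails as argued: a {\tt traveler} or {\tt searcher} need never meet $A$ or $B$ during $[\tau_0,\tau_1]$, because neither $A$'s $EST^*$ nor the searcher's unsynchronized phases $EXPLO(1),EXPLO(2),\dots$ sweeps the whole graph in that window, so no cleanliness violation is forced. Saying that ``under condition {\bf G} no deception can survive'' is a hope rather than an argument: deceptions do survive individual clean explorations, and the actual proof must accommodate them instead of excluding them.

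The paper's proof is structured quite differently. After the no-clone observation (which you reproduce correctly), it splits on whether all explorers in the final configuration have identical memories. If they do, it proves that any node rejected by an explorer during the BFS construction of its last $EST^*$ was in fact added to the BFS tree of that explorer or of some other explorer (its Claim 1), and then glues the pairwise isomorphic BFS trees together to reconstruct identical colored views for all explorers, yielding a clone configuration --- a contradiction. If they do not, it orders the memory-equivalence classes $\cC_1,\dots,\cC_h$ of explorers by seniority and shows that the last clean exploration of each class forces some explorer outside the classes treated so far to finish an exploration simultaneously and then keep moving afterwards; iterating this $h-1$ times produces a class $\cC_h$ whose last clean exploration must miss some node for a reason that can only be a meeting involving an already-frozen token or explorer, which is impossible. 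Neither the rejected-node/gluing argument nor the seniority chain appears in your proposal, so as written it has a genuine gap at its core.
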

    
    \begin{proof}
    A final configuration must consist of agents in states  {\tt explorer} , 
{\tt token} and {\tt shadow}, all situated in nodes $v_1, \dots ,v_k$, such that in each node $v_i$ there is exactly one agent $E_i$ in state {\tt explorer}, exactly one
agent $T_i$ in state {\tt token}  and possibly some agents in state  {\tt shadow}. As before we call such a final configuration a {\em clone} 
configuration if there are at least two
distinct nodes $v_i$, $v_j$ which have identical colored views. The same argument as in the proof of Lemma \ref{clone} shows that
a final configuration cannot be a clone configuration. 

It is enough to prove that $k=1$. Suppose for contradiction that $k>1$. We will consider two cases. In the first case the memories of all explorers $E_i$ are
identical and in the second case they are not. In both cases we will derive a contradiction.

 \vspace*{0.2cm}
        \noindent
Case 1. All explorers $E_i$ in the final configuration have identical memory.

In this case all these explorers performed the last exploration $EST^*$ simultaneously, {in view of the fact that the algorithm that they execute is deterministic}.

We start with the following claim.

 \vspace*{0.2cm}
        \noindent
{\bf Claim 1.} {If a node has been rejected by the explorer $E_j$ in the construction of its {(truncated) spanning} tree during its last exploration $EST^*$, then
this node, let us call it $w$, must have been either added previously by $E_j$ to its {(truncated) spanning} tree, or added  by another explorer $E_s$
in the construction of its {(truncated) spanning} tree during its last exploration $EST^*$.} 

The node $w$ was rejected by $E_j$ for the following reason.  $E_j$ 
traveled from $w$ using the reversal $\overline{q}$ of the path $q$, where $q$ is a path (coded as a sequence of ports) from $v_j$ to some node
$u$ already in the {(truncated) spanning} tree of $E_j$, and at the end of this path $\overline{q}$, $E_j$ met a token with memory $\cM$, such that $Pref_t(\cM)=recent\mbox{-}token$,
where $t$ is the last round when $E_j$ updated its variable $recent$-$token$.

There are two possible cases. If the token met by $E_j$ is its own token (residing at $v_j$), then $w$ is equal to some node $u$ already added previously
to the {(truncated) spanning} tree of $E_j$.  If, on the other hand, the token
met by $E_j$ is the token of some other explorer $E_s$, then we will show that $w$ is added by $E_s$ to its {(truncated) spanning tree}. 
Indeed, since $E_j$ has added a node $u$
to its {(truncated) spanning} tree, such that the path from $v_j$ to $u$ is $q$, the explorer $E_s$ must have added a node $u'$ to its {(truncated) spanning} tree, such that the path from $v_s$ to $u'$ is $q$
as well, because both $E_s$ and $E_j$ have identical memories. However, this node $u'$ must be equal to $w$, since the path from $w$ to $v_s$ is $\overline{q}$.
This proves the claim.\finclaim

The contradiction in Case 1 will be obtained in the following way. Using {(truncated) spanning} trees produced by explorers $E_i$ during their last exploration $EST^*$
(recall that these trees are isomorphic, since memories of the explorers are identical), we will construct the colored view for each explorer.
Using the fact that memories of the explorers are identical, these colored views will be identical. This will imply that the final configuration is a clone
configuration, which is impossible. 

The construction proceeds as follows (we will show it for explorer $E_1$). Let $T_i$ be the {(truncated) spanning} tree produced by $E_i$.
Each tree $T_i$ has its root $v_i$ colored black and all other nodes colored white. We will gradually
attach various trees to $T_1$ in order to obtain the colored view from $v_1$. 
First attach to every node of $T_1$ its neighbors that have been rejected by $E_1$ during the construction of $T_1$. Explorer $E_1$ has visited these
nodes, hence the respective port numbers can be faithfully added. Consider any such rejected node $w$.  By Claim 1, there are two possibilities.
If node $w$ was previously added by $E_1$ to $T_1$ as some node $u$, then we proceed as follows. Let $T_1'$ be the tree $T_1$ but rooted at $u$ instead of $v_1$.
We attach tree $T_1'$ at $w$, identifying its root $u$ with $w$.
If node $w$ was added  by another explorer $E_s$ in the construction of its {(truncated) spanning} tree $T_s$, we proceed as follows. 
As mentioned in the proof of Claim 1, the explorer $E_s$ must have added a node $u'$ to its {(truncated) spanning} tree, such that the path from $v_s$ to $u'$ is $q$.
Let $T_s'$ be the tree $T_s$ but rooted at $u'$ instead of $v_s$. We attach tree $T_s'$ at $w$, identifying its root $u'$ with $w$.

After processing all nodes rejected by $E_1$ and adding the appropriate trees, we attach all rejected neighbors of nodes in the newly obtained increased
tree. These nodes could have been rejected either by $E_1$ itself or by another explorer $E_j$ whose (re-rooted) tree $T_j'$ has been attached.
For each newly attached node rejected by $E_j$, the construction continues as before, replacing the role of $T_1$ by $T_j$.

{The above construction proceeds infinitely, producing an infinite rooted tree (rooted at a node corresponding to $v_1$). We make one final addition in order to obtain the view $\cV(v_1)$ from $v_1$ (cf. Claim~2) : each node $u$ of the tree is assigned a label corresponding to the shortest path from the root to $u$. The resulting infinite tree is denoted by $\mathcal{T}$}.

\vspace*{0.3cm}
\noindent
{{\bf Claim 2.} $\mathcal{T}$ corresponds to view $\cV(v_1)$}

{We prove the claim by induction. First of all, note that $\mathcal{T}$ truncated at depth $1$ from its root corresponds to the truncated view $\cV^1(v_1)$.
Assume as induction hypothesis that the tree $\mathcal{T}$ truncated at depth $l$ from its root, call it $\mathcal{T}^l$, corresponds to the truncated view $\cV^l(v_1)$. We will show that $\mathcal{T}^{(l+1)}$ corresponds to the truncated view $\cV^{(l+1)}(v_1)$.}

{According to the process described above, whenever a node $u$, corresponding to a node $x$ in the network $\mathcal{N}$, is added to $\mathcal{T}$ under construction, exactly one child $v$ is eventually added to $u$ for each neighbor $y$ of $x$ in $\mathcal{N}$. More precisely, node $x$ is connected to node $y$ by an edge having port $p$ at node $x$ and port $q$ at node $y$ iff node $u$ and its child $v$ are connected between them by an edge having port $p$ at node $u$ and port $q$ at node $v$. In addition, if the label of $u$ is path $\rho$ then the label of $v$ is path $\rho pq$.}

{Hence, if in tree $\mathcal{T}$ we identify a node with its label, we have the following two properties: (1) for every path $\pi$ of length $2l$ (according to the induction hypothesis, $\pi$ is located at distance $l$ from the root of $\mathcal{T}$ and corresponds to a path from $v_1$ in $\mathcal{N}$), its children correspond to all paths from $v_1$ of length $2l+2$ whose prefix is $\pi$ in $\mathcal{N}$ and (2) for every path $\pi$ of length $2l$, every child $\rho$ of path $\pi$, such that $\rho=\pi pq$, is connected to $\pi$ in $\mathcal{T}$ by an edge that has port $p$ at node $\pi$ and port $q$ at node $\rho$.}

{Moreover, since $\mathcal{T}^l$ is $\cV^l(v_1)$, the above two properties also hold if $l$ is replaced by any value ranging from $0$ to $(l-1)$. Hence $\mathcal{T}^{(l+1)}$ is $\cV^{(l+1)}(v_1)$, which proves the claim.} \finclaim

To produce the colored view, notice that there are only two colors in this colored view: white corresponding to empty nodes in the final configuration and
black corresponding to nodes $v_1$,..., $v_k$ (all these nodes get identical colors: since memories of explorers are the same, memories of their tokens
are also the same and memories of corresponding nodes in state {\tt shadow} are also identical).  It remains to indicate how the colors are distributed
in the constructed view. This is done as follows. When a tree $T_j'$ is attached, exactly one of its nodes (namely the node corresponding to $v_j$) is black.
Exactly these nodes become black in the obtained colored view. 

This construction of colored views is done for all explorers $E_i$. Consider two explorers $E_i$ and $E_j$.
Since these explorers have the same memory, the trees $T_s'$ attached at a given stage of the construction of the views of $E_i$ and $E_j$  are
isomorphic. They are also attached in the same places of the view. Hence by induction of the level of the view it follows that both colored views are identical.
This implies that the final configuration is a clone configuration which gives a contradiction in Case 1.

 \vspace*{0.2cm}
        \noindent
Case 2.  There are at least two explorers $E_i$ and $E_j$ with different memories in the final configuration.

Consider the equivalence relation on the set of explorers $E_1, \dots ,E_k$, such that two explorers are equivalent if their memories
in the final configuration are identical.
Let $\cC_1,\dots, \cC_h$, where $h>1$, be the equivalence classes of this relation. Suppose w.l.o.g. that $\cC_1$ is a class of explorers with smallest
seniority. We will use the following claim.

\vspace*{0.3cm}
\noindent
{{\bf Claim 3.}}
During the last exploration $EST^*$ of explorers in $\cC_1$, at least one of the following statements holds:
\begin{itemize}
\item
an explorer from $\cC_1$ has visited a token of an explorer not belonging to $\cC_1$;
\item
a token of an explorer from $\cC_1$ has been visited by an explorer not belonging to $\cC_1$.
\end{itemize}

In order to prove the claim consider two cases. If every node of the graph has been visited by some explorer from $\cC_1$, we will show that the
first statement holds. Indeed, since explorers from $\cC_1$ have the smallest seniority, during their last execution of  $EST^*$ all tokens{, which are in the final configuration,} are already at 
their respective nodes {(because otherwise there would be at least one token and its explorer in the final configuration that were created after the creation of the tokens from $\cC_1$, which would be a contradiction with the fact that explorers from $\cC_1$, and hence also their tokens, have the smallest seniority in the final configuration)}. Hence some explorers from $\cC_1$ must visit the tokens of explorers outside of $\cC_1$. Hence we can restrict attention
to the second case, when some nodes of the graph have not been visited by any explorer from $\cC_1$. Notice that if there were no other classes than
$\cC_1$, this could not occur. Indeed, we would be then in Case 1 (in which all explorers have identical memory). Thus Claim 1 would hold, which implies
that all nodes must be visited by some explorer, in view of the graph connectivity. 

Hence the fact that some node is not visited by explorers from $\cC_1$ must be due to a meeting of some other agent
(which is neither an explorer from  $\cC_1$ nor a token of such an explorer) during their last exploration $EST^*$.
What kind of a meeting can it be? It cannot be a meeting with an agent in state {\tt traveler} or {\tt searcher} because this would contradict that the last exploration
was clean. For the same reason it cannot be a meeting of an explorer from $\cC_1$ with another explorer. This leaves only the two types of meetings specified in the claim, which finishes the proof of the claim.\finclaim

Let $(Ex,Tok)$ be a couple of an explorer outside of  $\cC_1$ and of its token, such that either an explorer from $\cC_1$ visited $Tok$ or a token of an explorer
from $\cC_1$ has been visited by $Ex$ during the last exploration of explorers in the class $\cC_1$. Such a couple exists by {Claim 3}. The seniority of $Ex$ and $Tok$ must be the same as that of explorers from  $\cC_1$,
for otherwise their last exploration would not be clean. For the same reason, when explorers from $\cC_1$ started their last exploration, the explorer $Ex$ must
have started an exploration as well (possibly not its final exploration):  otherwise the exploration of explorers from $\cC_1$ would not be clean. Moreover we show that
 when explorers from $\cC_1$ finished their last exploration, explorer $Ex$ must have finished an exploration as well. To prove this, consider two cases, corresponding
 to two possibilities in {Claim 3}.  Suppose that an explorer $E_j$ from $\cC_1$ has visited $Tok$ and that its exploration did not finish simultaneously with the exploration
 of $Ex$. Consider the consecutive segments $S_1,S_2,S_3,S_4$ of the last exploration $EST^*$ of $E_j$. (Recall that these segments were specified in the definition of $EST^*$.) Since $E_j$ has visited $Tok$ during $EST^*$, it must have visited it during each segment $S_i$. At the end of $S_1$, explorer $E_j$ knows how long
 $EST^*$ will take. At the end of $S_2$ its token learns it as well. When $E_j$ visits $Tok$ again in segment $S_3$, there are two possibilities. 
 Either $Tok$ does not
 know when the exploration of $Ex$ finishes, or it does know that it finishes at a different time than the exploration of $E_j$. In both cases the explorer $E_j$ that
 updated its variable $recent$-$token$ at the end of $S_2$ can see that $recent\mbox{-}token \neq Pref_t(\cM)$, where $\cM$ is the memory of $Tok$ and $t$ is the end of $S_2$. 
This makes the last exploration of $E_j$ non clean, which is a contradiction. This proves that  $Ex$ and $E_j$ finish their exploration simultaneously, if $E_j$
has visited $Tok$. The other case, when $Ex$ has visited the token of $E_j$ is similar. Hence we conclude that explorations of $E_j$ and of $Ex$ started
and finished simultaneously.

Let $\tau$ be the round in which the last exploration of $E_j$ (and hence of all explorers in $\cC_1$) finished. The exploration of $Ex$ that finished in round $\tau$
cannot be its final exploration because then it would have the same memory as $E_j$ in the final configuration and thus it would be in the class  $\cC_1$ 
contrary to the choice of $Ex$. Hence $Ex$ must move after round $\tau$. It follows that there exists a class $\cC_i$ (w.l.o.g. let it be $\cC_2$) such that explorers 
from this class started their last exploration after round $\tau$. Note that during this last exploration, explorers from $\cC_2$ could not visit all nodes of the graph, for otherwise they would meet explorers from $\cC_1$ after round $\tau$, inducing them to move after this round, contradicting the fact that explorers from $\cC_1$
do not move after round $\tau$.

The fact that some node is not visited by explorers from $\cC_2$ must be due to a meeting of some other agent
(which is neither an explorer from  $\cC_2$ nor a token of such an explorer) during their last exploration $EST^*$.
Otherwise, for explorers in $\cC_2$ the situation would be identical as if their equivalence class were the only one, and hence, as in Case 1, 
they would visit all nodes. Moreover, the fact that some node is not visited by explorers from $\cC_2$ must be due to a meeting of some explorer outside of $\cC_1$
or of its token (if not, explorers from $\cC_1$ would move after round $\tau$, which is a contradiction). 
An argument similar to that used in the proof of {Claim 3} shows that there exists a couple $(Ex',Tok')$, such that $Ex'$ is an explorer outside of  $\cC_1 \cup \cC_2$,
$Tok'$ is its token, and 
either an explorer from $\cC_2$ visited $Tok'$ or a token of an explorer
from $\cC_2$ has been visited by $Ex'$ during the last exploration of explorers in the class $\cC_2$.
Let $\tau'$ be the round in which the last exploration of explorers from $\cC_2$ is finished.
Similarly as before, the explorer $Ex'$ terminates some exploration in round $\tau'$ but continues to move afterwards.

Repeating the same argument $h-1$ times we conclude that there exists a round $\tau^*$
after which all explorers from $\cC_1 \cup \cdots \cup \cC_{h-1}$ never move again, but the last exploration of explorers from 
 $\cC_h$ starts on or after $\tau^*$. During this last exploration there must be a node not visited by any explorer from $\cC_h$, 
 otherwise some explorers from $\cC_1 \cup \cdots \cup \cC_{h-1}$ would move after $\tau^*$.
 This is due to a meeting. It cannot be a meeting with an agent in state {\tt traveler} or {\tt searcher} because this would contradict that the last exploration
was clean. For the same reason it cannot be a meeting of an explorer from $\cC_h$ with another explorer. Hence two possibilities remain.
Either an explorer from $\cC_h$ visits a token of an explorer from $\cC_1 \cup \cdots \cup \cC_{h-1}$ or a token of an explorer from $\cC_h$ is visited
by an explorer from $\cC_1 \cup \cdots \cup \cC_{h-1}$. The first situation is impossible because it would contradict the cleanliness of the last exploration
of explorers from $\cC_h$ and the second situation is impossible because explorers from $\cC_1 \cup \cdots \cup \cC_{h-1}$ do not move after $\tau^*$.
Hence in Case 2 we obtain a contradiction as well, which completes the proof.
    \end{proof}
    
    Lemmas \ref{term2} and \ref{cor} imply the following result.

\begin{theorem}
Algorithm Gathering-without-Detection performs a correct gathering of all gatherable configurations and terminates in time polynomial in the size of the graph.
\end{theorem}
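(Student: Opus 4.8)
The plan is to assemble the theorem directly from the two lemmas just established, together with the characterization of gatherable configurations proved earlier in the paper. First I would recall that, by Lemma~\ref{not}, any initial configuration violating condition \textbf{G} is not gatherable at all: no algorithm, not even one dedicated to that specific configuration and not even one that need not detect termination, can bring all agents simultaneously to one node. Consequently the class of gatherable configurations is exactly the class of configurations satisfying \textbf{G}, and to prove the theorem it suffices to show that Algorithm Gathering-without-Detection correctly gathers every configuration satisfying \textbf{G}. This is consistent with the whole analysis of the algorithm, which has assumed \textbf{G} throughout.

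Next, fix a graph $G$ of (unknown) size $m$ and an initial configuration satisfying \textbf{G}, and run Algorithm Gathering-without-Detection. By Lemma~\ref{term2}, every agent eventually stops; more precisely, the proof of that lemma exhibits a polynomial $Y$ such that no agent moves after round $m\cdot Y(m)$ counted from the wake-up of the first agent. This simultaneously delivers termination and the required time bound, and it justifies speaking of the \emph{final} configuration, namely the agent placement reached once no agent ever moves again.

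Then I would invoke Lemma~\ref{cor}: in every final configuration exactly one node is occupied by agents. Since a final configuration is by definition one that persists forever, this states precisely that all agents eventually reach a single common node and stay there. Combining this with the time bound from Lemma~\ref{term2} yields the theorem: for every gatherable configuration the algorithm brings all agents to one node, all agents stop, and the whole process takes a number of rounds polynomial in the size $m$ of the graph.

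I do not expect any real obstacle at this final step, since the substantive work has already been carried out in Lemmas~\ref{term2} and~\ref{cor}. The only point requiring a word of care is the scope of the universality claim: one must invoke Lemma~\ref{not} to rule out all configurations outside \textbf{G}, so that ``gatherable'' in the statement of the theorem coincides exactly with ``satisfies \textbf{G}'', the hypothesis under which the correctness and time analyses of the algorithm were performed.
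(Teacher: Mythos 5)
Your proposal is correct and matches the paper's own argument, which derives the theorem directly from Lemma~\ref{term2} (termination in polynomial time, yielding a well-defined final configuration) and Lemma~\ref{cor} (the final configuration occupies a single node), with condition \textbf{G} assumed via Lemma~\ref{not} exactly as you note. No gaps; this is essentially the same proof.
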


\section{Consequences for leader election}

Leader election \cite{Ly} is a fundamental symmetry breaking problem in distributed
computing. Its goal is to assign, in some common round, value 1 (leader) to one of the entities and value 0 (non-leader)
to all others. The assignment should happen once for each identity, in a unique common round, and cannot be changed afterwards. 
In the context of anonymous agents in graphs, leader election can be formulated
as follows:
\begin{itemize}
\item
There exists a common unique round in which one of the agents assigns itself value 1 (i.e., it declares itself a leader) and each 
other agent assigns itself value 0 (i.e., it declares itself non-leader).
\end{itemize}

The following proposition says that the problems of leader election and of gathering with detection are equivalent in the following strong sense.
Consider any  initial configuration of agents in a graph. If gathering with detection can be accomplished for this configuration in some round $t$, then
leader election can be accomplished for this configuration in some round $t'>t$, and conversely, if leader election can be accomplished for this configuration 
in some round $t$, then gathering with detection  can be accomplished for this configuration in some round $t^*>t$.

\begin{proposition}\label{eqbis}
Leader election is equivalent to gathering with detection.
\end{proposition}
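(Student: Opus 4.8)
The plan is to prove the equivalence in both directions by showing how to simulate one task given a solver for the other, in a way that preserves the termination guarantees and works in time polynomial in $n$. The key observation is that both tasks are symmetry-breaking tasks among anonymous agents, and the obstruction to solving either is exactly condition \textbf{G}: by Theorem \ref{eq}, gathering with detection is possible (for all gatherable configurations, under a known bound $n$) precisely for configurations satisfying \textbf{G}, so I want to argue that leader election is possible under exactly the same circumstances, and that each task can be used as a black box to solve the other.

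First I would prove that gathering with detection yields leader election. Suppose all agents run Algorithm Gathering-with-Detection (or any gathering-with-detection algorithm). By the time detection occurs, there is a common round in which all agents are collocated at a single node and all know that gathering is accomplished. At that moment they can exchange their entire memories, so each agent sees the multiset of all agents' memories. Since the agents reached this common node by distinct paths from distinct starting positions, no two agents have identical memories (this is exactly the remark made in Section \ref{prelim} that agents which meet must have different memories, applied to all agents now gathered). Hence the memories are pairwise distinct, and the agent whose memory is largest under the order $\prec$ is uniquely determined; it declares itself leader and all others declare themselves non-leaders, in the round immediately following detection. This is a common round, so leader election is solved.

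Next I would prove the converse: leader election yields gathering with detection. Suppose all agents run a leader-election algorithm; by definition there is a common round $r$ in which exactly one agent $\ell$ has value $1$ and all others have value $0$. The difficulty is that after round $r$ the non-leaders do not know where $\ell$ is, nor does $\ell$ know where the others are, and they must still gather. The idea is that leader election having succeeded already certifies that the configuration satisfies \textbf{G}: indeed, the argument of Lemma \ref{not} shows that if \textbf{G} fails then either all agents have identical views (so, with simultaneous wake-up, all agents always have identical memories, hence cannot be distinguished in any common round — no leader can be singled out) or every agent has a homolog with identical enhanced view, hence identical memory in every round under simultaneous wake-up, so leader and homolog would be indistinguishable in round $r$, a contradiction. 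So a leader-election algorithm can only succeed on configurations satisfying \textbf{G}; but by Theorem \ref{eq} every such configuration is gathered with detection by Algorithm Gathering-with-Detection. Thus one can simply run the leader-election algorithm (to learn the configuration is \textbf{G}, though this step is not even logically needed) and then invoke Algorithm Gathering-with-Detection; more directly, since leader election and gathering-with-detection are each solvable exactly on the class of \textbf{G}-configurations, and each can be compiled into the other, the two tasks are equivalent.

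\textbf{The main obstacle} I anticipate is the converse direction done \emph{cleanly}, i.e., without invoking Theorem \ref{eq} as a black box but giving a direct reduction from a leader-election solver to a gathering-with-detection solver: after the leader is elected, getting all agents physically to one node requires essentially re-running the whole machinery of exploration, $TZ$, tokens and explorers, so the honest reduction is ``run leader election, then run Algorithm Gathering-with-Detection,'' and the content of the proposition is really the bookkeeping that (i) the detection round of gathering furnishes the common round needed for leader election, and (ii) the impossibility side of condition \textbf{G} is the same obstruction for both tasks. I would therefore structure the proof as: (a) gathering with detection $\Rightarrow$ leader election, via the maximal-memory rule at the detection round; (b) leader election $\Rightarrow$ gathering with detection, by observing that a successful leader-election run implies \textbf{G} (via the homolog/identical-memory argument of Lemma \ref{not}) and then appealing to Theorem \ref{eq}; and note that all time bounds are polynomial in $n$ since the only added cost over Algorithm Gathering-with-Detection is one extra round of memory exchange and declaration.
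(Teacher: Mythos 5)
Your first direction coincides with the paper's: at the detection round all collocated agents have pairwise distinct memories, and the one with the largest memory under $\prec$ declares itself leader in the next round. The converse direction, however, is where your proposal diverges and where there is a genuine gap. The paper does \emph{not} argue via Lemma \ref{not} and Theorem \ref{eq}; it gives a direct, knowledge-free reduction: once leader election has succeeded in a common round $t$, the leader stops forever and plays the role of a stationary token, while every non-leader runs $EXPLO(n)$ for $T(EXPLO(n))$ rounds in phases $n=1,2,\dots$ until it finds the token, then executes $EST$ with that token to obtain a map of the graph and hence its exact size $m$, and then waits at the token until the commonly computable round $t^*=t+\sum_{i=1}^m T(EXPLO(i))+T(EST(m))$, in which all agents are provably collocated and simultaneously declare that gathering is over. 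This reduction uses no a priori bound on the size of the graph.

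Your substitute for this step --- ``a successful leader-election run certifies {\bf G}, and by Theorem \ref{eq} such configurations are gathered with detection by Algorithm Gathering-with-Detection'' --- only shows that the two tasks are feasible on the same class of configurations, and even that only by importing the knowledge of an upper bound $n$ (or a dedicated algorithm) that Theorem \ref{eq} requires; the leader-election black box plays no role in the gathering you produce. This weaker, class-level equivalence is not how the proposition is used: the Corollary deduces that no universal leader-election algorithm exists when no upper bound is known, by turning a hypothetical such algorithm into a universal gathering-with-detection algorithm and contradicting Theorem \ref{no}. Your converse cannot support that deduction, precisely because it reintroduces the bound. So the missing idea is the token/$EST$/synchronized-deadline construction that converts an elected leader into gathering with detection within the same knowledge scenario; you anticipated the difficulty (``the converse done cleanly'') but did not resolve it, whereas the paper's two-paragraph proof resolves it directly.
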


\begin{proof}
Suppose that gathering with detection is accomplished and let $t$ be the round when all agents
are together and declare that gathering is over. As mentioned in the Preliminaries, all agents must have different
memories, since they are at the same node, and, being together, they can compare these memories. 
Since,  in view of detection, the round $t$ is known to all agents,
in round $t+1$ the agent with 
the largest memory assigns itself value 1 and all other agents assign themselves value 0.

Conversely, suppose that leader election is accomplished and let $t$ be the round in which one 
of the agents assigns itself value 1 and all other agents assign themselves value 0. Starting from round $t$
the agent with value 1 stops forever and plays the role of the token, all other agents playing the role of explorers.
First, every explorer finds the token by executing procedure $EXPLO(n)$ for $T(EXPLO(n))$ rounds in phases 
$n=1,2,...$, until it finds the token in round $t'$ (this round may be different for every explorer). 
Then every explorer executes procedure $EST$ (using the token) and finds the map of the graph and hence its size $m$.
Then it waits with the token until round $t^*=t+\sum_{i=1}^m T(EXPLO(i))+T(EST(m))$. By this round
all explorers must have found the token and executed procedure $EST$, i.e., they are all together with the token.
In round $t^*$ all agents declare that gathering is over. 
\end{proof}

Proposition \ref{eqbis} implies that the class of initial configurations for which leader election is at all possible (even only using an algorithm dedicated
to this specific configuration) is equal to the class of gatherable configurations, i.e., to the class of configurations
satisfying property {\bf G}. Similarly as for gathering, we will say that a leader election algorithm is {\em universal} if it performs leader election for all
such configurations. It follows that a small modification of Algorithm Gathering-with-Detection is a universal leader election algorithm,
provided that an upper bound on the size of the graph is known to the agents. The modification is 
the following: use Algorithm Gathering-with-Detection to gather all agents in some round $t$, and, in round $t+1$, elect as leader the agent that has the largest 
memory in the round of the gathering declaration (this agent assigns itself value 1 and all other agents assign themselves value 0). Let LE be the name of this modified algorithm.

The following corollary summarizes the above discussion and gives a complete solution of the leader election
problem for anonymous agents in arbitrary graphs.

\begin{corollary}
For a given initial configuration, leader election is possible if and only if this
configuration satisfies condition {\bf G}. 
If an upper bound on the size of the graph is known, then Algorithm LE accomplishes leader election for
all these configurations.
There is no universal algorithm accomplishing leader election for all configurations satisfying condition {\bf G} in all graphs.
\end{corollary}




\end{document}